\newcommand{\eq}[1]{\begin{equation}#1 \end{equation}}
\newcommand{\Ex}[1]{\mathbb{E}\left[#1\right]}
\newcommand{\korner}{K\"{o}rner }
\newcommand{\entropy}[1]{\frac{1}{2}\log\left(2 \pi e #1\right)}
\newcommand{\entr}[2]{\frac{#1}{2}\log\left( #2\right)}
\newtheorem{thm}{Theorem}
\newtheorem{rem}{Remark}
\newtheorem{lem}{Lemma}
\newtheorem{cor}{Corollary}
\newtheorem{defin}{Definition}
\author{
\IEEEauthorblockN{Amin Jafarian}
\IEEEauthorblockA{Lab. of Informatics, Networks \& Communications (LINC)\\Department of Electrical \& Computer Engineering\\
University of Texas at Austin \\
Austin, TX 78712, USA\\
Email: jafarian@mail.utexas.edu}
\and
\authorblockN{Sriram Vishwanath}
\authorblockA{Lab. of Informatics, Networks  \& Communications (LINC)\\
Department of Electrical \& Computer Engineering\\
University of Texas at Austin\\
Austin, TX 78712, USA\\
Email: sriram@ece.utexas.edu}
}
\title{The Two-User Gaussian Fading Broadcast Channel}
\date{ }
\begin{document}
\maketitle
\begin{abstract}
This paper presents outerbounds for the two-user Gaussian fading broadcast channel. These outerbounds are based on Costa's entropy power inequality (Costa-EPI) and are formulated mathematically as a feasibility problem. For classes of the two-user Gaussian fading broadcast channel where the outerbound is found to have a feasible solution, we find conditions under which a suitable inner and outer bound meet. For all such cases, this paper provides a {\em partial} characterization of the capacity region of the Gaussian two-user fading broadcast channel.
\end{abstract}

\section{Introduction}
The Gaussian fading broadcast channel is of the basic Gaussian channels whose capacity region  still remains unknown. This channel is a broadcast channel with additive Gaussian noise and multiplicative state, as depicted in Figure \ref{fig:chmodel}. This multiplicative state is unknown to the transmitter while being known to the receivers. Such a channel represents one of the simplest models for down-link communication in a conventional cellular system where there is no channel-state feedback to the transmitter. This channel is different from most other Gaussian broadcast channels analyzed in literature in that it is, in general, a non-degraded (and non-more-capable) broadcast channel. This channel has received considerable attention in recent years \cite{JV09,TYL08}, with inner and outer bounds presented for this channel.

\begin{figure}[h!]
\begin{center}
\includegraphics[width=.4\textwidth]{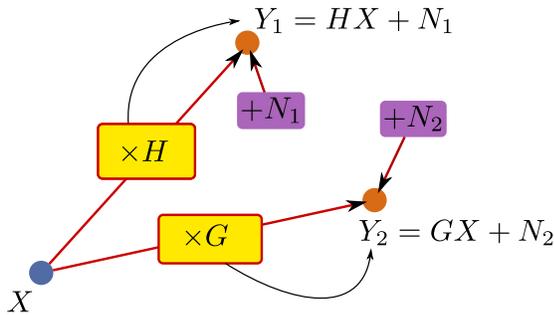}
\caption{Gaussian Fading Broadcast Channel}
\end{center}
\label{fig:chmodel}
\end{figure}

Gaussian fading broadcast channels have been studied and characterized in many other contexts, including when the channel state is known to both the transmitter and receivers. In this setting, it is found to reduce to multiple parallel degraded broadcast channels \cite{GL01}. In the case of vector Gaussian fading broadcast channels with state known to all parties, again, we have a separation into parallel channels, where dirty paper coding yields the rate region for each parallel case. However, such a separation does not hold in the case where the state is not known to the transmitter. Moreover, dirty paper coding as employed in \cite{JG04} does not directly apply to such settings.

In order to improving our understanding of the capacity limits of this channel, we develop a class of outer bounds for this channel based on Costa's entropy power inequality (Costa-EPI). Conventional EPI is typically effective for analyzing degraded (and parallel) broadcast channels, but is typically found not directly applicable for non-degraded cases. Costa-EPI provides us with {\em constraints} that enable us to develop an optimization framework for an outer bound on the capacity region of the fading broadcast channel. When this optimization problem has a feasible solution, it yields a potentially tighter outer bound on the rate region that its genie-based (degraded) counterparts \cite{TYL08,JV09, ALG05}.  In some special cases, the rates obtained using this outer bound can be achieved, thus providing a partial characterization of the channel's capacity region. 

In summary, our main results in this paper are as follows:

\begin{itemize}
\item We use Costa-EPI to develop an optimization-based framework for an outer bound for this channel (Theorem 1).  In general, this optimization problem is a feasibility problem. When feasible, it presents a bound that is in general tighter that other genie aided approaches.
\item For certain classes of channels, we show that the outer bound is in fact achievable, thus characterizing a portion of the capacity region of this channel.
\end{itemize}

The rest of this paper is organized as follows: the next section presents the system model. A description of the background on outer bounds for the broadcast channel and the particular outer bounding framework developed in this paper is presented in Section III. Section IV presents the achievable scheme and compares it with the outer bound.

\section{Channel Model}
\label{sec:model}

The two user broadcast fading channel is given mathematically by $Y_1=X+\frac{N_1}{H}$
and $Y_2=X+\frac{N_2}{G}$.

where $X$ denotes the output of the Transmitter which is limited  to a power of $Q$, and $Y_i$ corresponds to the input observed by Receiver $i$, $i \in \{1,2\}$. Each received signal undergoes a fade corresponding to random variables $H$ for Receiver 1 and $G$ for Receiver 2. This fade is assumed to be known perfectly to each receiver, and is thus modeled as a multiplicative factor impacting the receiver-side additive Gaussian noise $N_i$ at each receiver. This received side noise is assumed to be normalized to be of unit variance at each receiver. As stated in the introduction, the instantiations of neither $H$ nor $G$ are known to the transmitter. The transmitter, however, is aware of the distributions of $H$ and $G$. Our goal is to find non-trivial outer bounds on the capacity region of this channel, and when possible, exact sum-capacity results for this channel.

For simplicity, both $H$ and $G$ are assumed to have a discrete support. Thus, $H$ can take one of $n$ possible fade-states $h_1,\dots,h_n$ in accordance with the p.m.f $[p_1,p_2,\ldots,p_n]$. Similarly, $G$ takes on one of $m$ possible values $g_1,\dots,g_m$ according to the p.m.f. $[q_1,q_2,\ldots,q_m]$ .

\section{Background \& Main Result}
In this section, we first present an overview on exiting literature regarding outer bounds for the broadcast channel. Second, we present the main result based on an optimization framework for the outer bound based on Costa-EPI.

\subsection{Definitions and Background}

We define $C_1$ to equal the ergodic point-to-point capacity of Receiver 1. In other words, if Receiver 2 were absent from the channel model, the ergodic capacity is known and equals:
\[
C_1 \triangleq \frac{1}{2} \Ex{ \log ( 1 + Q H^2)}
\]

In the same spirit, $C_2$ is defined to equal 
\[C_2\triangleq \frac{1}{2} \Ex{ \log ( 1 + Q G^2)}\]

The first outer bound for the broadcast channel, developed from the expressions in \cite{ahlswede74_2}, is fairly intuitive:

\begin{equation}
\label{eq:outer1}
\{(R_1,R_2): R_1 \le C_1,
R_2 \le C_2 \}
\end{equation}
 
 Such an outer bound is clearly tight at at least two extreme points of the channel's capacity region. However, it is unclear if such a bound would be tight at other points of the region as no achievable strategy is known to exist. 
 
 To improve on the outer bound in (\ref{eq:outer1}), multiple approaches exist in literature. The most significant class of bounds is a genie-aided approach \cite{SKC08}, where a genie is provided to the transmitter, the receiver(s) or both. Although a powerful approach, this genie must be carefully designed so as to lead to computable expressions for the outer bound while not ``weakening" the outer bound significantly. 
 
 One example of this genie based approach is the so-called ``1-bit" genie \cite{ALG05}. Here, the transmitter is provided with a single bit of non-causal feedback per channel use, corresponding to:
 \[
 1(H \ge G) 
 \]

where $1(.)$ is the indicator function. This feedback transforms the channel into a set of parallel degraded channels, one corresponding to all settings when $H \ge G$ and another when $H < G$. As the channel capacity is known as a computable expression for each degraded channel, an ergodic average of the two rate-region forms an outer bound on the original channel in Section (\ref{sec:model}). It is shown in \cite{TYL08} that this bound is within 6 bits/channel use of the capacity region. Although this approach yields a rate region that is strictly better as an outer bound than the rate region in (\ref{eq:outer1}), it is still unclear if it is tight at points other than the two extreme points of the capacity region. 

This paper is based on building a class of computable outer bounds based on the K\"{o}rner-Marton for the broadcast channel \cite{M09}.  This outer bound, for a memoryless broadcast channel with channel transition probability $p(y_1,y_2|x)$ is given by:

\begin{equation}
\label{eq:km}
\{(R_1,R_2): \bigcup_{p(u,x)} R_1 \le I(X;Y_1|U) ~~ R_2 \le I(U;Y_2) \}
\end{equation}
Note that a similar expression, given by:

\begin{equation}
\label{eq:km2}
\{(R_1,R_2): \bigcup_{p(v,x)} R_1 \le I(V;Y_2) ~~ R_2 \le I(X;Y_2|V) \}
\end{equation}

is also a valid outer bound on this channel. In this paper, we will refer to either (\ref{eq:km}) or (\ref{eq:km2}) as the K\"{o}rner-Marton outer bound, with the context clarifying which permutations of users is under consideration.

The K\"{o}rner-Marton  outer bound can be generalized to a memoryless broadcast channel with state with transition probability $p(y_1,y_2|x,h,g)$, where the state is only known to the receivers as:

\begin{equation}
\label{eq:kmwithstate}
\{(R_1,R_2): \bigcup_{p(u,x)} R_1 \le I(X;Y_1|U,H) ~~ R_2 \le I(U;Y_2|G) \}
\end{equation}

Note that the K\"{o}rner-Marton outer bound is computable as an optimization problem for a discrete memoryless broadcast channel as a cardinality bound on the auxiliary random variable $U$ can be imposed \cite{AEA10}. However, in the case of the Gaussian fading broadcast channel, there is no cardinality bound on $U$ (or, for that matter, any restriction on the support of $p(u,x)$) and therefore the expression in (\ref{eq:kmwithstate}) is not directly computable. Thus, the rest of this chapter is dedicated to developing a framework under which the bound and its relaxations can be computed for the Gaussian fading broadcast channel.

  The K\"{o}rner-Marton  outer bounding framework is chosen for multiple reasons: first, it is known that this outer bound is tight for the degraded (static and fading) and MIMO Gaussian broadcast channels with full state knowledge at transmitter and receivers. Thus, it is conceivable that it would be a good choice of this channel as well. Second, the rate expressions involve only one auxiliary random variable, and therefore can be optimized with fewer steps than one with more parameters.

An arbitrary point  on the boundary of the K\"{o}rner-Marton outer bound in (\ref{eq:kmwithstate}) can be expressed in terms of the objective:

\eq{\label{KMouter} R_1+w R_2\le I(X;Y_1|U,H)+w I(U;Y_2|G),} 

where $w$ is a positive weighting factor. Intuitively, since the rate region is convex, $R_1 + w R_2$ represents a tangent to the rate-region boundary and all points are covered for $0 \le w \le \infty$. 

Our focus in this chapter is on a portion of the \korner-Marton rate region corresponding to $1 \le w\le \infty$ in (\ref{KMouter}).  Note that, despite this narrowing in the range of possible values for $w$, an upper bound corresponding to every point on the channel's rate region is obtained when a permuted optimization problem of the form:
 
 \eq{\label{KMouter2} {\hat w} R_1+ R_2\le {\hat w} I(V;Y_1|H)+  I(X;Y_2|V, G)} 

is considered for $1 \le {\hat w} \le \infty$. In other words, an intersection of the two outer bounds corresponding to (\ref{KMouter}) and (\ref{KMouter2}) will form an outer bound on the capacity region of the original channel. Given the symmetries and similarities between (\ref{KMouter}) and (\ref{KMouter2}), the rest of the chapter will focus on (\ref{KMouter}) alone, and leave the treatment of (\ref{KMouter2}) to the reader.

Next, we proceed to describe the main result of this paper.

\subsection{Main Result}

The main results of this paper is a computable outer-bound on a portion of capacity region of a class of fading broadcast channel. This outer-bound is tight  for a non-trivial class of channels. In Section \ref{sec:ach}, we address the proposed achievability scheme.  Throughout this section, we assume   $w\ge 1$ and  function $r(w,x)$ as defined as:

\eq{r(w,x) \triangleq \sum_{i=1}^n \frac{p_i}{x+\frac{1}{h_i^2}}- w \sum_{j=1}^m\frac{q_j}{x+\frac{1}{g_j^2}}.\label{rx}}

Let $1_k$ be a $k$-dimensional row vector of all one.
Next theorem, provides a condition under which we have a computable outerbound for the channel we introduced before.

\begin{thm}
\label{main:thm}
For each $w$, depending on the value of $r(w,x)$ for $0\le x \le Q$, we have the following cases:

{\bf Case 1: $r(w,Q^*)=0$:} In this case the following is an outer bound for the weighted sum-capacity $R_1+w R_2$:
\[R_1+wR_2\le \frac{1}{2}\Ex{\log(1+H^2Q^*)} + \frac{w}{2}\Ex{\log\left(\frac{1+G^2Q}{1+G^2Q^*}\right)},\]
if there exists a $n \times m$, positive matrix $A$ satisfying the following conditions:

\begin{itemize}
\item $ 1_n A=1_m$
\item $[a_1 a_2 \dots a_n] A = [b_1 b_2 \dots b_m]$
\item $\forall k \in \{1,2,\dots  n\}, \ \ \displaystyle{ \sum_{i=1}^k \frac{p_i}{a_i} \le w \sum_{j=1}^m \frac{q_j}{b_j} \left( \sum_{i=1}^k A_{ij}\right)}$
\end{itemize}
where $a_i=Q^*+\frac{1}{h_i^2}$ and $b_j=Q^*+\frac{1}{g_j^2}$.

The outerbound is tight if in addition to above conditions we have:
\[\Ex{ \log\left(\frac{ 1 + H^2 Q}{1+H^2 {Q}^*}\right)} \ge  \Ex{ \log\left(\frac{1 + G^2 Q}{1+G^2 {Q}^*}\right)}.\]
{\bf Case 2: $r(w,Q)>0$:} In this case the following is the weighted sum-capacity $R_1+w R_2$:
\[R_1+wR_2= \frac{1}{2}\Ex{\log(1+H^2Q)},\]
if there exists a $n \times m$, positive matrix $A$ satisfying the following conditions:

\begin{itemize}
\item $ 1_n A=1_m$
\item $[a_1 a_2 \dots a_n] A = [b_1 b_2 \dots b_m]$
\item $\forall k \in \{1,2,\dots  n\}, \ \ \displaystyle{ \sum_{i=k}^n \frac{p_i}{a_i} \ge w \sum_{j=1}^m \frac{q_j}{b_j} \left( \sum_{i=k}^n A_{ij}\right)}$
\end{itemize}
where $a_i=Q+\frac{1}{h_i^2}$ and $b_j=Q+\frac{1}{g_j^2}$.

{\bf Case 3: $r(w,0)<0$:} In this case the following is the weighted sum-capacity $R_1+w R_2$:
\[R_1+wR_2= \frac{w}{2}\Ex{\log(1+G^2Q)},\]
if there exists a $n \times m$, positive matrix $A$ satisfying the following conditions:

\begin{itemize}
\item $ 1_n A=1_m$
\item $[a_1 a_2 \dots a_n] A = [b_1 b_2 \dots b_m]$
\item $\forall k \in \{1,2,\dots  n\}, \ \ \displaystyle{ \sum_{i=1}^k \frac{p_i}{a_i} \le w \sum_{j=1}^m \frac{q_j}{b_j} \left( \sum_{i=1}^k A_{ij}\right)}$
\end{itemize}
where $a_i=\frac{1}{h_i^2}$ and $b_j=\frac{1}{g_j^2}$.

\end{thm}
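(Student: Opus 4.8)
The plan is to start from the Körner–Marton bound (\ref{eq:kmwithstate}) in its weighted form (\ref{KMouter}) and collapse the optimization over all $p(u,x)$ into a one–dimensional functional optimization over a single conditional entropy–power profile. I would place all the fading noises on a common Gaussian path and set, for noise variance $x\ge 0$,
\[ t(x)\triangleq\frac{1}{2\pi e}e^{2h(X+Z_x\mid U)},\qquad s(x)\triangleq\frac{1}{2\pi e}e^{2h(X+Z_x)}, \]
where $Z_x\sim\mathcal N(0,x)$. Writing $v_i=1/h_i^2$ and $u_j=1/g_j^2$ and subtracting the noise entropies, the objective becomes
\[ I(X;Y_1\mid U,H)+w\,I(U;Y_2\mid G)=\frac12\sum_i p_i\log\frac{t(v_i)}{v_i}+\frac{w}{2}\sum_j q_j\log\frac{s(u_j)}{t(u_j)}. \]
Because this increases in $s$ and $s(x)\le Q+x$ (power constraint plus the Gaussian maximum–entropy bound), I would replace $s(u_j)$ by $Q+u_j$ to get a valid upper bound depending on $t$ alone.

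The admissible profiles $t$ satisfy three properties: concavity in $x$ (the conditional form of Costa–EPI), the derivative bound $t'(x)\ge 1$ (conditional Stam/de Bruijn, since $t'=t\cdot J(X+Z_x\mid U)$ and conditional Stam gives $t\cdot J\ge 1$ after a Jensen step on $e^{-2h}$), and $x\le t(x)\le Q+x$. I would then guess the Gaussian–superposition profile $t^{\star}(x)=Q^{\star}+x$ as the extremizer and check that its value is exactly $\tfrac12\Ex{\log(1+H^2Q^{\star})}+\tfrac{w}{2}\Ex{\log\frac{1+G^2Q}{1+G^2Q^{\star}}}$. Differentiating this one–parameter family in $Q^{\star}$ yields precisely $\tfrac12 r(w,Q^{\star})$, so stationarity $r(w,Q^{\star})=0$ selects $Q^{\star}$ in Case 1, while $r(w,Q)>0$ and $r(w,0)<0$ force the optimum to the boundaries $Q^{\star}=Q$ and $Q^{\star}=0$, giving Cases 2 and 3.

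The crux is global optimality: after cancelling the common Gaussian terms, I must establish
\[ \sum_i p_i\log\frac{t(v_i)}{Q^{\star}+v_i}\le w\sum_j q_j\log\frac{t(u_j)}{Q^{\star}+u_j}\qquad(\star) \]
for every admissible $t$, and this is where the matrix $A$ serves as the certificate. Its constraints $1_nA=1_m$ and $[a_i]A=[b_j]$ (with $a_i=Q^{\star}+v_i$, $b_j=Q^{\star}+u_j$) are equivalent to $\sum_i A_{ij}=1$ and $u_j=\sum_i A_{ij}v_i$, so each $u_j$ is a convex combination of the $v_i$ and concavity gives $t(u_j)\ge\sum_i A_{ij}t(v_i)$. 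Feeding this into the right side of $(\star)$ and applying Jensen to $\log$ lower–bounds it by $\sum_i\tilde p_i\log\frac{t(v_i)}{a_i}$ with $\tilde p_i=w a_i\sum_j q_jA_{ij}/b_j$, after which it remains to show $\sum_i(\tilde p_i-p_i)\log\frac{t(v_i)}{a_i}\ge 0$. I would do this by Abel summation: the third hypothesis is exactly $\sum_{i\le k}p_i/a_i\le\sum_{i\le k}\tilde p_i/a_i$ for every $k$, and $r(w,Q^{\star})=0$ makes the total sum vanish, so summation–by–parts reduces everything to the monotonicity of $m(x)\triangleq(Q^{\star}+x)\log\frac{t(x)}{Q^{\star}+x}$, whose derivative $m'(x)=\log\frac{t}{a}+\frac{a t'}{t}-1\ge 0$ follows from $t'\ge 1$ together with $z-1\ge\log z$. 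This is the \emph{main obstacle}: threading Costa–EPI concavity (for the Jensen step), the Stam bound $t'\ge 1$ (which forces $m$ monotone and thereby fixes the state ordering that the prefix–sum hypotheses presuppose), and the coupling/majorization structure of $A$ into a single chain certifying the Gaussian profile; Cases 2 and 3 run analogously with the boundary value of $Q^{\star}$ and the suffix/prefix versions of the inequality.

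Finally, for the tightness claim in Case 1 I would match the outer bound with a Gaussian superposition inner scheme using an inner codeword of power $Q^{\star}$; the extra hypothesis $\Ex{\log\frac{1+H^2Q}{1+H^2Q^{\star}}}\ge\Ex{\log\frac{1+G^2Q}{1+G^2Q^{\star}}}$ guarantees that the stronger user can decode and strip the outer layer, so the achievable weighted sum attains the bound.
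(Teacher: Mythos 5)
Your proposal is correct, and its first half is exactly the paper's route: the weighted K\"{o}rner--Marton bound, removal of $w\,h(Y_2|G)-h(Y_1|X,U,H)$ as a Gaussian-maximum-entropy constant, conditional Costa-EPI concavity (the paper's Lemma \ref{lem:epi}, your concavity of $t$), and the matrix $A$ fed through Jensen to give $t(u_j)\ge\sum_i A_{ij}t(v_i)$, which is precisely steps (\ref{a1})--(\ref{a3}) of the appendix. The genuine divergence is in certifying that $t^*(x)=Q^*+x$ is extremal. The paper recasts this as the finite-dimensional program (\ref{opt2}) in $f_i=h(X+Z/h_i|U)$ --- your bound $t'\ge 1$ appears there in integrated form as Lemma \ref{lem:epi2}, i.e.\ $t(x)-x$ nondecreasing --- and then verifies KKT conditions at the Gaussian point, the majorization hypotheses surfacing as nonnegativity of the multipliers; indeed, up to normalization the paper's $\lambda_k$ are exactly your Abel partial sums $S_k=\sum_{i\le k}(\tilde p_i-p_i)/a_i$. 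You instead run a direct inequality chain: a second Jensen step with weights $A_{ij}a_i/b_j$ produces the tilted masses $\tilde p_i$, Abel summation invokes the prefix-sum hypothesis together with $r(w,Q^*)=0$, and everything reduces to monotonicity of $m(x)=(Q^*+x)\log\frac{t(x)}{Q^*+x}$, settled by $t'\ge 1$ and $\log z\le z-1$. Your route buys rigor: it never appeals to sufficiency of KKT, which in the paper is delicate --- the objective of (\ref{opt2}) is in fact concave, not convex as Lemma \ref{lem:convex} claims (log-sum-exp is convex, so the Hessian computed there carries a flipped sign), and the EPI ordering constraints are quasi-convex sets rather than convex functions, so stationarity plus multiplier signs does not by itself certify a global maximum; your chain is an unconditional certificate and makes visible where each hypothesis enters (majorization $\to$ Abel, $r(w,Q^*)=0$ $\to$ $S_n=0$, Costa concavity $\to$ first Jensen, EPI/Stam $\to$ monotonicity of $m$). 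What the paper's formulation buys is the feasibility/LP framing of Remark \ref{main:rem2} and one template for all three cases. Two refinements would complete your sketch: (i) in Cases 2 and 3 the Abel telescoping leaves an uncancelled boundary term, which must be killed by the entropy bounds $t(x)\le Q+x$ (Case 2) and $t(x)\ge x$ (Case 3) --- constraints that are inactive in Case 1 but needed there; (ii) to avoid assuming $t$ differentiable, use the discrete form of your monotonicity step, namely that $t(v_k)\ge t(v_{k+1})+(v_k-v_{k+1})$ implies $m(v_k)\ge m(v_{k+1})$ via the same inequality $\log z+1/z\ge 1$. Your Case-1 tightness argument coincides with the paper's superposition scheme of Section \ref{sec:ach} with inner power $Q^*$.
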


The third condition in each case is referred as a majorization requirement.
\noindent{\em Proof:} See the appendix.

\begin{rem}
\label{main:rem1}
Note that we can always add ``virtual" fades to both the channels, i.e., fade coefficients with corresponding zero probability of occurence. This can actually help in finding a matrix $A$ in each case for the same problem.
\end{rem}

\begin{rem}
\label{main:rem2}
Checking the existence of such a matrix $A$ in theorem statement is equivalent to solving a Linear Program (LP). Therefore, one can efficiently check the conditions of Theorem \ref{main:thm}. Note that there is no systematic manner in which virtual fades can be introduced into the problem. However, for some classes of channels, this can be checked in a straightforward fashion.
\end{rem}

First, we show  Theorem \ref{main:thm} can be used for degraded channels.  The scheme we present for degraded channels can be generalized to a wider class of channels. For the degraded case, we must have$g_m<h_1$. For this case, let us first assume that there exists a $0\le Q^* \le Q$ such that $r(w,Q^*)=0$. Therefore, to evaluate the outer bound, we must satisfy the conditions corresponding to the first case in  Theorem \ref{main:thm} . Other cases can be handled similarly. We construct the following vectors:
\begin{align}
x&=[b_1 b_2 \dots b_m a_1 a_2 \dots a_n],\\
y&=[b_1 b_2 \dots b_m],\\
s&=[\underbrace{0 0 \dots 0}_{\textrm {m times}} \frac{p_1}{a_1} \frac{p_2}{a_2} \dots \frac{p_n}{a_n}],\\
t&=[w\frac{q_1}{b_1} w\frac{q_2}{b_2} \dots w\frac{q_m}{b_m}],
\end{align}
where $a_i=Q^*+\frac{1}{h_i^2}$ and $b_j=Q^*+\frac{1}{g_j^2}$. Note that we add virtual fades $g_1,g_2,\dots, g_m$ in order to construct $x$. 

To satisfy the conditions imposed in the first case within Theorem \ref{main:thm}, it is sufficient to show that there exists a non-negative $(m+n)\times m$ matrix $A$ satisfying the following conditions:
\begin{itemize}
\item $1_{m+n} A= 1_m$
\item $x A=y$
\item $\displaystyle{\forall k \in \{1,2,\dots ,m+n\},  \sum_{i=1}^k s_i \le \sum_{j=1}^m t_j \left(\sum_{i=1}^k A_{ij}\right)}$
\end{itemize}

It is easy to check that the following matrix $A$ satisfies all these three conditions:

\eq{A=\left[\begin{array}{c}
I_{m\times m}\\
0_{n\times m}
\end{array}\right].}

Next, we find that this can be generalized to non-degraded case as well. First, we note that the last condition above can be simplified to: 

\[\forall k \in \{1,2,\dots ,m+n\},  \sum_{i=1}^k s_i \le \sum_{j=1}^m t_j 1\left(x_i=y_j\right).\]
 

We define:

\eq{\label{cond:Q*}T \triangleq \sum_{i=1}^n \frac{p_i}{Q^*+\frac{1}{h_i^2}}= w \sum_{j=1}^m\frac{q_j}{Q^*+\frac{1}{g_j^2}}.}

Then we have the following theorem.
\begin{thm}
\label{thm1}
Consider a two-user Gaussian broadcast fading channel with channel coefficients $\{h_1<\dots<h_n\}$
and $\{g_1<\dots<g_m\}$ and corresponding probability distribution $p_i,q_j$. Assume $h_n>g_m$ and $p_n\neq 0$.

Define \[T_1\triangleq \frac{1}{1-p_n}\sum_{i=1}^{n-1} \frac{p_i}{Q^*+\frac{1}{h_i^2}}=\frac{T-\frac{p_n}{a_n}}{1-p_n},\]
and $g_0,g_{m+1}$ such that $0<g_0<h_1$ and $g_{m+1}>h_n$. Let
$k\in\{0,\dots,m+1\}$ be such that
\eq{\label{thm:cond1} \frac{1}{Q^*+ \frac{1}{g_k^2}}\le {T_1}\le \frac{1}{Q^*+ \frac{1}{g_{k+1}^2}}.}

Then the following is an upper-bound on the weighted sum-rate:

\[R_1+wR_2\le \frac{1}{2}\Ex{\log(1+H^2Q^*)} + \frac{w}{2}\Ex{\log\left(\frac{1+G^2Q}{1+G^2Q^*}\right)},\]

if 
\eq{\label{thm:cond2} (w-1)T_1 \ge w\sum_{j=1}^k q_j \left( T_1-\frac{1}{Q^*+\frac{1}{g_j^2}} \right).}
\end{thm}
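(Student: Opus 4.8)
The plan is to deduce Theorem~\ref{thm1} as a corollary of Case~1 of Theorem~\ref{main:thm}: under the stated hypotheses we are handed a $Q^*$ with $r(w,Q^*)=0$, equivalently $T=\sum_{i}p_i/a_i=w\sum_j q_j/b_j$ by (\ref{cond:Q*}), so the displayed bound on $R_1+wR_2$ follows the instant we produce a matrix $A$ meeting the three requirements of that case. Hence the whole proof reduces to constructing, for the augmented data $x=[b_1\cdots b_m\,a_1\cdots a_n]$, $y=[b_1\cdots b_m]$, $s$ and $t$ fixed before the theorem, a nonnegative $(m+n)\times m$ matrix that is column-stochastic ($1_{m+n}A=1_m$), mean-preserving in $a$-space ($xA=y$), and satisfies the prefix inequalities $\sum_{i\le k}s_i\le\sum_j t_j\sum_{i\le k}A_{ij}$ for every $k$.

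First I would fix the combined decreasing order of the source points. Because the channel is non-degraded ($h_n>g_m$, hence $a_n<b_m$), the real points $a_i$ and the virtual points $b_j$ interleave instead of separating as they do in the degraded case, so the identity-type coupling that works there fails the majorization test. Introducing the bracketing virtual fades $g_0<h_1$ and $g_{m+1}>h_n$ of Remark~\ref{main:rem1} makes $b_0$ exceed every $a_i$ and $b_{m+1}$ fall below $a_n$, so that $a_n$ is singled out as the unique source lying beneath all the real targets --- this is the point that must ``borrow'' mass from weaker fades to reconstruct the means.

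Next I would build $A$ explicitly, separating the strongest fade $h_n$ (mass $p_n$, point $a_n$) from the aggregate of $h_1,\dots,h_{n-1}$. The number $T_1=\frac{1}{1-p_n}\sum_{i<n}p_i/a_i$ is exactly the normalized reciprocal level of that aggregate, and condition (\ref{thm:cond1}) pins $T_1$ between $1/b_k$ and $1/b_{k+1}$; this index $k$ is the split point, separating the targets $b_1,\dots,b_k$ (which sit above $1/T_1$) from $b_{k+1},\dots,b_m$ (which sit below it and must be pulled down using $a_n$). The two equality constraints --- unit column sums and column means equal to the $b_j$ --- then determine the splitting fractions by a routine linear solve once $k$ is fixed; I would simultaneously check that these fractions stay nonnegative, which is where the relative position of $T_1$ encoded in (\ref{thm:cond1}) is used.

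The hard part will be the prefix (majorization) inequalities, which must hold for every cut of the interleaved order simultaneously. I would argue that the two prefix sums are piecewise-linear in the cut, so the family is governed by a single binding prefix, namely the one at which the mass borrowed by the small point $a_n$ is first exhausted. Reducing that binding inequality and using the identity $T=(1-p_n)T_1+p_n/a_n=w\sum_j q_j/b_j$ to eliminate the strong-fade contribution collapses the entire family to precisely the scalar condition (\ref{thm:cond2}). Showing that this reduction really is the worst case among all prefixes of the interleaved order --- and not some other cut created by the interleaving --- together with the nonnegativity check, is the crux of the argument; the tightness addendum of Case~1 is not invoked, since Theorem~\ref{thm1} asserts only the outer bound.
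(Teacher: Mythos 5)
Your opening reduction is exactly right, and it is the same as the paper's: Theorem \ref{thm1} follows once one exhibits, after adding virtual fades (Remark \ref{main:rem1}), a nonnegative matrix meeting the column-sum, mean-preservation, and prefix (majorization) requirements of Case 1 of Theorem \ref{main:thm}. The problem is that everything after this reduction --- which is where the theorem actually lives --- remains a plan rather than a proof. You never write the matrix down: ``the two equality constraints \dots determine the splitting fractions by a routine linear solve'' is not a construction, because each column has far more unknowns than the two equations you impose (the ``aggregate'' of $a_1,\dots,a_{n-1}$ is not a single point, so you must specify how a column distributes its mass among those $n-1$ sources before nonnegativity or prefix sums can even be checked). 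And you explicitly defer both the nonnegativity verification and the identification of the binding prefix inequality, calling them ``the crux of the argument.'' Since condition (\ref{thm:cond2}) has to emerge from exactly those two checks, what is missing is the proof itself.

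The paper's construction also indicates that your sketch assigns the two hypotheses to the wrong places, so the deferred steps would likely not collapse the way you predict. The paper adds a \emph{single} virtual fade $h_0$ with $p_0=0$ and lets $h_0\to 0$, i.e.\ $a_0\to\infty$; column $j$ of its $(n+1)\times m$ matrix carries entries $\alpha_j\,p_i b_j/(w q_j a_i)$ on rows $i=1,\dots,n-1$ (this proportional-to-$p_i/a_i$ spreading is what makes the aggregate act as an effective point at level $1/T_1$), with $\beta_j$ (row $0$) and $\gamma_j$ (row $n$) then forced by the two equality constraints. In that construction, condition (\ref{thm:cond2}) is precisely the condition that positive $\alpha_j$ with $\sum_j\alpha_j=1$ exist keeping every $\beta_j,\gamma_j\ge 0$ --- the limit $a_0\to\infty$ is what turns $\gamma_j\ge 0$ into a linear bound on $\alpha_j$, and condition (\ref{thm:cond1}) serves only to define $k$, i.e.\ to decide which of the two bounds on $\alpha_j$ is active for a given $j$. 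The majorization inequalities then hold \emph{automatically}, with slack $w\sum_j \beta_j q_j/b_j\ge 0$, because $\sum_j\alpha_j=1$. Your plan inverts this (nonnegativity from (\ref{thm:cond1}), majorization from (\ref{thm:cond2}) via a ``single binding prefix''), and your bracketing fades $g_0,g_{m+1}$, which in the theorem statement exist only to make the definition of $k$ well posed, are fixed finite quantities and cannot substitute for the unbounded virtual fade $a_0\to\infty$ that the positivity argument requires.
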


\noindent{\em Proof:} See the appendix.

This theorem partially characterizes the capacity region of classes of Gaussian fading broadcast channels (this is explained in greater detail in the next section.)

\begin{cor}
\label{cor:onesided}
Consider  a non-degraded, one-sided fading broadcast channel, (i.e., a channel with $m=1$). Then, the following is an upper-bound on the weighted sum-rate:
\[R_1+wR_2\le \frac{1}{2}\Ex{\log(1+H^2Q^*)} + \frac{w}{2}\log\left(\frac{1+g^2Q}{1+g^2Q^*}\right),\]
\end{cor}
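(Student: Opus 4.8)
\noindent\emph{Proof proposal.} The plan is to obtain the corollary simply by specializing Theorem \ref{thm1} to $m=1$ and checking that its sole extra hypothesis, inequality (\ref{thm:cond2}), is then automatically satisfied, so that the stated bound follows at once. With $m=1$ there is a single fade $g_1=g$ carrying all the mass, $q_1=1$, and the defining relation (\ref{cond:Q*}) collapses to $T = w/(Q^*+\frac{1}{g^2})$, equivalently $\frac{1}{Q^*+1/g^2}=T/w$. The non-degraded, one-sided assumption together with $h_n>g$ puts $g$ strictly between the extreme fades of Receiver~1, which is exactly what will let me locate the index $k$ below.

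First I would control the auxiliary quantity $T_1=(T-p_n/a_n)/(1-p_n)$ with $a_n=Q^*+\frac{1}{h_n^2}$ on both sides. Since $h_n$ is the largest fade, $\frac{1}{Q^*+1/h_n^2}$ is the largest of the terms $\frac{1}{Q^*+1/h_i^2}$, so the probability-weighted average obeys $T=\sum_i p_i\frac{1}{Q^*+1/h_i^2}\le \frac{1}{Q^*+1/h_n^2}$; dividing by $p_n$ and rearranging the definition of $T_1$ turns this into $T_1\le T$. On the other side, $T_1$ is a convex combination (weights $p_i/(1-p_n)$) of the terms $\frac{1}{Q^*+1/h_i^2}$ for $i\le n-1$, hence $T_1\ge \frac{1}{Q^*+1/h_1^2}$, and in particular $T_1\ge 0$.

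Next I would pin down $k$. Inserting the virtual fades with $0<g_0<h_1$ and $g_2=g_{m+1}>h_n$, the two bounds above give $\frac{1}{Q^*+1/g_0^2}<\frac{1}{Q^*+1/h_1^2}\le T_1$ and $T_1\le T\le \frac{1}{Q^*+1/h_n^2}<\frac{1}{Q^*+1/g_2^2}$, so the bracketing condition (\ref{thm:cond1}) forces $k\in\{0,1\}$. It then remains to verify (\ref{thm:cond2}) in each subcase. For $k=0$ the right-hand sum is empty and the condition reads $(w-1)T_1\ge 0$, true because $w\ge 1$ and $T_1\ge 0$. For $k=1$ the single term $j=1$ has $q_1=1$ and $\frac{1}{Q^*+1/g^2}=T/w$, so (\ref{thm:cond2}) becomes $(w-1)T_1\ge wT_1-T$, i.e.\ $T_1\le T$, which is precisely the inequality established above. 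Thus Theorem \ref{thm1} applies, and since $G=g$ is deterministic the expectation $\Ex{\log\left(\frac{1+G^2Q}{1+G^2Q^*}\right)}$ equals $\log\left(\frac{1+g^2Q}{1+g^2Q^*}\right)$, yielding the claimed bound.

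The one genuinely load-bearing step is the correct identification of $k$ and the reduction of (\ref{thm:cond2}); both rest on the single inequality $T\le \frac{1}{Q^*+1/h_n^2}$, itself nothing more than the fact that a probability-weighted average of the $\frac{1}{Q^*+1/h_i^2}$ cannot exceed their maximum. Everything else is bookkeeping particular to $m=1$, so I expect the main care to be in confirming that the interval in (\ref{thm:cond1}) is well defined once the virtual fades $g_0,g_2$ are adjoined, and that $1-p_n>0$ so that $T_1$ is meaningful (guaranteed by the non-degraded hypothesis, which needs $n\ge 2$ effective states straddling $g$).
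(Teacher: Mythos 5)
Your proposal is correct and follows essentially the same route as the paper: specialize Theorem \ref{thm1} to $m=1$, split into the cases $k=0$ and $k=1$, and check condition (\ref{thm:cond2}) in each, with the $k=1$ case reducing to $T_1\le T$ (which the paper asserts as $T>T_1$ and you verify via the average-versus-maximum argument). Your write-up merely supplies details the paper leaves implicit, namely the bracketing of $T_1$ by the virtual fades and the explicit proof that $T_1\le T$.
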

\noindent{\em Proof:}
It follows directly from Theorem \ref{thm1}. There are two cases:
\begin{enumerate}
\item $\displaystyle{\frac{1}{Q^*+\frac{1}{g^2}}>T_1} \Rightarrow k=0$
\item $\displaystyle{\frac{1}{Q^*+\frac{1}{g^2}}\le T_1} \Rightarrow k=1$
\end{enumerate}

In the first case, Condition (\ref{thm:cond2}) of Theorem \ref{thm1} simplifies to $(w-1)T_1\ge 0$ which comes from the fact that $w\ge 1$ and $T_1\ge 0$. In the second case, condition (\ref{thm:cond2}) holds because:
\[\frac{w}{Q^*+\frac{1}{g^2}}=T>T_1.\] \qed

We  show in the next section that this upper-bound is in fact tight and therefore part of the capacity region of one sided, fading broadcast channel, $w>1$, is characterized.

The following theorem also provide an upper-bound on the part of capacity region of another class of fading broadcast channels that are, in general, non-degraded.
\begin{thm}
\label{thm:deg-gen}
Consider a 2-user broadcast fading channel with channel fades  $\{h_1<\dots<h_n\}$
and $\{g_1<\dots<g_m\}$ with corresponding probability distribution $p_i,q_j$, where $h_n>g_m$ and $p_n\neq 0$. If channel fades satisfy one of the following two conditions:
\begin{enumerate}
\item $\displaystyle{(1-p_n) g_1^2 \ge \sum_{i=1}^{n-1} p_i h_i^2}$
\item $\displaystyle{\frac{1-p_n}{g_n^2} \ge \sum_{i=1}^{n-1} \frac{p_i}{h_i^2}}$
\end{enumerate}
Then the following is an upper-bound on the weighted sum-rate of the channel:
\[R_1+wR_2\le \frac{1}{2}\Ex{\log(1+H^2Q^*)} + \frac{w}{2}\Ex{\log\left(\frac{1+G^2Q}{1+G^2Q^*}\right)},\]

where $w\ge 1$, and $0\le Q^*\le Q$ is such that
\[T \triangleq \sum_{i=1}^n \frac{p_i}{Q^*+\frac{1}{h_i^2}}= w \sum_{j=1}^m\frac{q_j}{Q^*+\frac{1}{g_j^2}}.\]
\end{thm}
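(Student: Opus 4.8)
The plan is to deduce Theorem~\ref{thm:deg-gen} from Theorem~\ref{thm1}. Both results share the hypotheses $h_n>g_m$ and $p_n\neq 0$ and the same $Q^*$ (the root of $r(w,\cdot)$, equivalently the point where $T=w\sum_j q_j/(Q^*+1/g_j^2)$). Hence, for a channel satisfying condition~1 or condition~2, it suffices to exhibit an index $k$ obeying (\ref{thm:cond1}) and to verify the majorization inequality (\ref{thm:cond2}) for that $k$. I will show that condition~1 forces the choice $k=0$ and condition~2 forces $k=m$, and that in either extreme (\ref{thm:cond2}) collapses to something immediate. Throughout I abbreviate $d_i=1/(Q^*+1/h_i^2)$ and $c_j=1/(Q^*+1/g_j^2)$, so that $T_1=\frac{1}{1-p_n}\sum_{i=1}^{n-1}p_i d_i$ and (\ref{thm:cond1}) simply places $T_1$ between $c_k$ and $c_{k+1}$, with the virtual fades $g_0,g_{m+1}$ supplying the two end slots.

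For condition~1 I would use that $f(x)=x/(1+Q^*x)$ is increasing and concave. Jensen's inequality then gives $T_1=\frac{1}{1-p_n}\sum_{i=1}^{n-1}p_i f(h_i^2)\le f\!\big(\tfrac{1}{1-p_n}\sum_{i=1}^{n-1}p_i h_i^2\big)\le f(g_1^2)=c_1$, where the final step is exactly condition~1 together with monotonicity of $f$. Thus $T_1\le c_1$, so $k=0$ meets (\ref{thm:cond1}) once $g_0$ is taken small enough to satisfy its lower bound, and for $k=0$ the right-hand side of (\ref{thm:cond2}) is an empty sum, leaving $(w-1)T_1\ge 0$, which holds since $w\ge 1$ and $T_1\ge 0$.

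For condition~2 I would instead write $d_i=1/(Q^*+u_i)$ with $u_i=1/h_i^2$; the map $u\mapsto 1/(Q^*+u)$ is convex, so Jensen now lower-bounds $T_1$, namely $T_1\ge 1/\big(Q^*+\tfrac{1}{1-p_n}\sum_{i=1}^{n-1}p_i/h_i^2\big)\ge 1/(Q^*+1/g_m^2)=c_m$, the last inequality being condition~2 (the relevant fade here is the largest one, $g_m$). Hence $T_1\ge c_m$, so $k=m$ meets (\ref{thm:cond1}) once $g_{m+1}$ is taken large, noting $T_1<1/Q^*$ always. For $k=m$ I would collapse the sum in (\ref{thm:cond2}) using $\sum_j q_j=1$ and the stationarity relation $w\sum_j q_j c_j=T$, reducing (\ref{thm:cond2}) to $(w-1)T_1\ge wT_1-T$, i.e.\ $T_1\le T$. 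Since $T=(1-p_n)T_1+p_n d_n$, this is $p_n(d_n-T_1)\ge 0$, which holds because $T_1$ is a convex combination of the $d_i$ with $i\le n-1$ and each such $d_i<d_n$ (as $h_i<h_n$ and $f$ is increasing).

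The routine part is algebra; the substantive choices are recognizing the two Jensen directions (concavity of $x/(1+Q^*x)$ for condition~1, convexity of $1/(Q^*+u)$ for condition~2) and that the two conditions pin $k$ to opposite ends of the range. The main obstacle I anticipate is the $k=m$ case: verifying (\ref{thm:cond2}) there genuinely requires the constraint $r(w,Q^*)=0$ to rewrite $\sum_j q_j c_j$, after which everything reduces to $T_1<d_n$; getting that bookkeeping right, and confirming that the virtual end-fades $g_0,g_{m+1}$ legitimately realize the extreme values of $k$, is where I would be most careful.
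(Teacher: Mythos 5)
Your proof is correct and follows the same overall skeleton as the paper's: both arguments reduce Theorem \ref{thm:deg-gen} to Theorem \ref{thm1} by showing that condition 1 pins the index to $k=0$ (so that (\ref{thm:cond2}) degenerates to $(w-1)T_1\ge 0$), while condition 2 pins it to $k=m$ (so that, using the defining relation $w\sum_{j=1}^m q_j/(Q^*+1/g_j^2)=T$, (\ref{thm:cond2}) collapses to $T\ge T_1$, i.e.\ $p_n\left(\frac{1}{Q^*+1/h_n^2}-T_1\right)\ge 0$, which holds since $T_1$ is a convex combination of terms each smaller than $\frac{1}{Q^*+1/h_n^2}$). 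Your condition-2 argument is literally the paper's: Jensen on the convex map $u\mapsto 1/(Q^*+u)$ followed by the hypothesis, with the theorem's ``$g_n$'' read as the largest fade $g_m$ --- a typo you correctly flag. The one place you genuinely diverge is condition 1. The paper defines $f(x)=\frac{1}{1-p_n}\sum_{i=1}^{n-1}\frac{p_i}{x+1/h_i^2}-\frac{1}{x+1/g_1^2}$ as a function of the power variable, checks $f(0)\le 0$ from the hypothesis, derives the differential inequality $f'(x)\le -f(x)V$ with $V>0$ (itself a Jensen consequence), and propagates negativity forward to conclude $f(Q^*)\le 0$. You instead apply Jensen once, at the fixed power $Q^*$, to the increasing concave map $x\mapsto x/(1+Q^*x)$, obtaining $T_1\le 1/(Q^*+1/g_1^2)$ in one line. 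Your version is shorter and avoids the ODE-style propagation step (which the paper leaves as ``a simple argument''); the paper's version proves the stronger fact that the inequality holds at every power level $x\ge 0$, but that extra strength is never used. Your bookkeeping with the virtual end-fades $g_0,g_{m+1}$ to legitimize $k=0$ and $k=m$ is also sound, and in the $k=m$ case the bound $T_1<\frac{1}{Q^*+1/h_n^2}<\frac{1}{Q^*+1/g_{m+1}^2}$ in fact holds for any admissible $g_{m+1}>h_n$, so no limiting argument is needed there.
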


\noindent{\em Proof:} See the appendix.

Note that the degraded fading Gaussian broadcast channel is a particular example of a channel that satisfies the conditions imposed by Theorem \ref{thm:deg-gen}. We state this formally in the following corollary.
 
\begin{cor}
\label{cor:deg}
In a two user, fading broadcast channel, if $g_n<h_1$, the weighted-sum rate for $w>1$ is upper-bounded as follows:
\[R_1+wR_2\le  \frac{1}{2}\Ex{\log(1+H^2Q^*)} + \frac{w}{2}\Ex{\log\left(\frac{1+G^2Q}{1+G^2Q^*}\right)}.\] 
\end{cor}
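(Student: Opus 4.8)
The plan is to derive the corollary directly from Theorem \ref{thm:deg-gen} by checking that the degraded hypothesis $g_n<h_1$ forces one of that theorem's two alternative conditions to hold. Since the bound claimed here is verbatim the bound of Theorem \ref{thm:deg-gen}, and the existence of a $Q^*\in[0,Q]$ solving $T=w\sum_j q_j/(Q^*+1/g_j^2)$ is already part of that theorem's setup, nothing new needs to be established about $Q^*$ or about the underlying Costa-EPI machinery. The entire content of the corollary is the verification of a hypothesis, so the argument should be short.

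First I would observe that the degraded assumption orders the fades completely: since $h_1<\dots<h_n$ and the largest user-2 fade (written $g_n$ in the notation of Theorem \ref{thm:deg-gen}) satisfies $g_n<h_1$, we have $h_i>g_n\ge g_j$ for every $i$ and every $j$, and in particular $1/h_i^2<1/g_n^2$ for all $i\in\{1,\dots,n-1\}$. I would then test the second of the two conditions in Theorem \ref{thm:deg-gen}. Using $\sum_{i=1}^{n-1}p_i=1-p_n$ together with the termwise bound $1/h_i^2<1/g_n^2$, one obtains
\[\sum_{i=1}^{n-1}\frac{p_i}{h_i^2}<\frac{1}{g_n^2}\sum_{i=1}^{n-1}p_i=\frac{1-p_n}{g_n^2},\]
which is exactly condition (2) of Theorem \ref{thm:deg-gen}, in fact with strict inequality. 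Invoking that theorem with this condition satisfied yields the stated bound.

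The one genuine decision in the argument is \emph{which} of the two conditions to verify: condition (1) does \emph{not} hold in the degraded regime, because the same complete ordering runs the other way, giving $\sum_{i=1}^{n-1}p_i h_i^2>g_1^2\sum_{i=1}^{n-1}p_i=(1-p_n)g_1^2$, the reverse of what condition (1) requires. Thus the classical degraded channel is captured by the reciprocal-fade condition (2) rather than the fade-power condition (1). Because the verification reduces to a single termwise comparison, I do not expect any real obstacle; the substantive work was already carried out in establishing Theorem \ref{thm:deg-gen}, and this corollary is essentially the remark that the degraded fading broadcast channel lies within its scope.
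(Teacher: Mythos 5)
Your proposal is correct and takes exactly the same route as the paper: the paper's proof is the single sentence that under $g_n<h_1$ the second condition of Theorem~\ref{thm:deg-gen} holds, and you simply make explicit the termwise computation $\sum_{i=1}^{n-1}p_i/h_i^2 < (1-p_n)/g_n^2$ that justifies it. Your side remark that condition~(1) fails in the degraded regime is a correct (if inessential) observation, modulo the trivial edge case $p_n=1$ where both sides vanish.
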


\noindent{\em Proof:} 
If $g_n<h_1$, second condition of Theorem \ref{thm:deg-gen} holds and therefore the upper-bound is given by the corresponding theorem.
\qed

In the next section we present (simple) achievable strategies for this channel and show that, for some special cases, a portion of the capacity region  can be characterized.

\section{Achievable Scheme}
\label{sec:ach}
The achievable scheme we adopt is  Gaussian superposition coding. In essence, the transmitter superposes two codewords to derive its transmit sequence $X$, in the form
$X=X_1+X_2$. Here, $X_i$ is the codeword to be decoded at Receiver $i$.  Each codebook corresponding to $X_i$ is generated i.i.d. in accordance with a Gaussian distribution, with the variance of $X_1$ given by $0\le \tilde{Q}\le Q$ and of $X_2$ by $Q- \tilde{Q}$.  Given this, we have

\begin{align}
R_1 & \le  \frac{1}{2} \Ex{\log\left(1+H^2  \tilde{Q}\right)}\label{ach1}\\
R_2  & \le  \frac{1}{2}\min{\left(\Ex{ \log\left(\frac{ 1 + H^2 Q}{1+H^2 \tilde{Q}}\right)} ,  \Ex{ \log\left(\frac{1 + G^2 Q}{1+G^2 \tilde{Q}}\right)} \right)}\label{ach2}
\end{align}

as rates that are achievable in the system over all choices of $0\le\tilde{Q}\le Q$. In a similar spirit, the rates given by

\begin{equation}
\label{ach3}
\begin{array}{ll}
R_1 & \le \min{\left(\frac{1}{2} \Ex{ \log\left(\frac{ 1 + H^2 Q}{1+H^2  \tilde{Q}}\right)} , \frac{1}{2} \Ex{ \log\left(\frac{1 + G^2 Q}{1+G^2 \tilde{Q}}\right)} \right)} \\
R_2 & \le  \frac{1}{2} \Ex{ \log\left(1+G^2  \tilde{Q}\right)}
\end{array}
\end{equation}

are also achievable on this channel. The union of the two rate regions  given by (\ref{ach1}),(\ref{ach2}) and (\ref{ach3}) describes the entire achievable rate region for this channel.

In this section, we focus on achievable rates from Equations (\ref{ach1}) and (\ref{ach2}). Note that in Equation (\ref{ach2}), if 
\eq{\label{ach:cond}\Ex{ \log\left(\frac{ 1 + H^2 Q}{1+H^2 \tilde{Q}}\right)} \ge  \Ex{ \log\left(\frac{1 + G^2 Q}{1+G^2 \tilde{Q}}\right)},}

the following weighted sum-rate can be achieved:
\eq{\label{ach:wsr} R_1+wR_2\ge \frac{1}{2}\Ex{\log(1+H^2\tilde{Q})} + \frac{w}{2}\Ex{\log\left(\frac{1+G^2Q}{1+G^2\tilde{Q}}\right)}} 

We maximize right hand side of (\ref{ach:wsr}) with respect to $0 \le \tilde{Q} \le Q$, by taking derivative and set it to zero. The optimal $Q^*$ should satisfy the following equation:
\[\sum_{i=1}^n \frac{p_i}{Q^*+\frac{1}{h_i^2}}= w \sum_{j=1}^m\frac{q_j}{Q^*+\frac{1}{g_j^2}}\]
Note that this is the same condition as (\ref{cond:Q*}).
This means if for this optimal $Q^*$, condition (\ref{ach:cond}) and Condition (\ref{thm:cond2}) holds, the following weighted sum rate lies on the boundary of the capacity region of this channel for some $w>1$:

\eq{ R_1+wR_2= \frac{1}{2}\Ex{\log(1+H^2Q^*)} + \frac{w}{2}\Ex{\log\left(\frac{1+G^2Q}{1+G^2 Q^*}\right)}. \nonumber }

We use this fact to prove the following theorems.

\begin{thm}[A portion of the capacity region of one-sided fading broadcast channel]
\label{cap:onesided}
Consider a non-degraded, one sided fading broadcast channel, where the second channel is fixed. For $w\ge 1$, if Equation (\ref{cond:Q*}) has a solution in the range $0< Q^*<Q$,  the following characterizes  the weighted sum rate of this channel for $w$ 
\[
R_1+wR_2= \frac{1}{2}\Ex{\log(1+H^2Q^*)} + \frac{w}{2}\Ex{\log\left(\frac{1+g^2Q}{1+g^2 Q^*}\right)}.\]

If the solution of Equation (\ref{cond:Q*}) occurs outside the interval $[0,Q]$, weighted sum-capacity is achieved at one of the end points.
 
\end{thm}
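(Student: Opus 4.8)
\noindent\emph{Proof idea.}
The plan is to sandwich the weighted sum rate between the outer bound already available and a matching Gaussian superposition inner bound, treating the interior case $0<Q^*<Q$ and the boundary case separately. For the upper bound I would invoke Corollary~\ref{cor:onesided}, which for $m=1$ asserts $R_1+wR_2\le \frac12\Ex{\log(1+H^2Q^*)}+\frac{w}{2}\log\frac{1+g^2Q}{1+g^2Q^*}$ with $Q^*$ the solution of (\ref{cond:Q*}); the hypotheses of Theorem~\ref{thm1} used there do not involve $Q$ and were checked in the proof of that corollary, so the bound is in force as soon as $0<Q^*<Q$. Thus the substance of the theorem is entirely in exhibiting a scheme that meets this bound.

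For the inner bound I would evaluate the superposition region (\ref{ach1})--(\ref{ach2}) at the power split $\tilde Q=Q^*$. If the selection condition (\ref{ach:cond}) holds at $\tilde Q=Q^*$, the minimum in (\ref{ach2}) is realized by the receiver-2 term and the achievable weighted sum rate becomes exactly $\frac12\Ex{\log(1+H^2Q^*)}+\frac{w}{2}\log\frac{1+g^2Q}{1+g^2Q^*}$, closing the gap to the outer bound. To justify the choice $\tilde Q=Q^*$, I would note that, under (\ref{ach:cond}), the objective $\frac12\Ex{\log(1+H^2\tilde Q)}+\frac{w}{2}\log\frac{1+g^2Q}{1+g^2\tilde Q}$ is concave in $\tilde Q$ and its stationarity equation is precisely (\ref{cond:Q*}); hence its maximizer is the interior root $Q^*$ when $Q^*\in(0,Q)$, and lies at an endpoint otherwise. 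The boundary case then follows from Cases 2 and 3 of Theorem~\ref{main:thm} (the regimes $r(w,Q)>0$ and $r(w,0)<0$), whose majorization requirements reduce, exactly as in Corollary~\ref{cor:onesided}, to consequences of $w\ge1$, and which identify the weighted sum capacity with the corner points $(C_1,0)$ and $(0,C_2)$.

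The crux is verifying (\ref{ach:cond}) at $\tilde Q=Q^*$. I would rewrite each logarithm through the identity $\log\frac{1+H^2Q}{1+H^2Q^*}=\int_{Q^*}^{Q}\frac{dx}{x+1/H^2}$ (and likewise for $g$), so that after taking expectations (\ref{ach:cond}) is equivalent to
\[
\int_{Q^*}^{Q}\Big(\sum_{i=1}^n\frac{p_i}{x+1/h_i^2}-\frac{1}{x+1/g^2}\Big)\,dx\ \ge\ 0 .
\]
Recognizing the integrand as $r(w,x)+(w-1)\,\big(x+1/g^2\big)^{-1}$ with $r$ from (\ref{rx}) (for $m=1$), and using $r(w,Q^*)=0$ together with $w\ge1$, reduces the problem to the sign of $\sum_i p_i(x+1/h_i^2)^{-1}-(x+1/g^2)^{-1}$ on $[Q^*,Q]$. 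When $\Ex{1/H^2}\le 1/g^2$ this is immediate: the map $T\mapsto\log\frac{T+Q}{T+Q^*}$ is convex and decreasing, so Jensen gives $\Ex{\log\frac{1+H^2Q}{1+H^2Q^*}}\ge\log\frac{\Ex{1/H^2}+Q}{\Ex{1/H^2}+Q^*}$, and the right-hand side is $\ge\log\frac{1+g^2Q}{1+g^2Q^*}$ because it decreases in $\Ex{1/H^2}$ and $\Ex{1/H^2}\le 1/g^2$; thus (\ref{ach:cond}) holds with room to spare.

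The part I expect to be genuinely delicate is the complementary regime $\Ex{1/H^2}>1/g^2$. Here the two curves $\sum_i p_i(x+1/h_i^2)^{-1}$ and $(x+1/g^2)^{-1}$ cross exactly once in the domain (the single-crossing property behind the uniqueness of the root of (\ref{cond:Q*})), and once $Q$ exceeds that crossing abscissa the integrand is negative near $x=Q$; positivity of the integral then has to come from the surplus accumulated near $x=Q^*$, where $w\ge1$ forces $\sum_i p_i(Q^*+1/h_i^2)^{-1}=w\,(Q^*+1/g^2)^{-1}\ge (Q^*+1/g^2)^{-1}$. Turning this balance into a proof is the main obstacle; I expect it to require precisely the majorization condition (\ref{thm:cond2}) of Theorem~\ref{thm1}, which for $m=1$ couples $Q$, $Q^*$ and the crossing location, or equivalently a direct comparison of the antiderivative $\sum_i p_i\log(x+1/h_i^2)-\log(x+1/g^2)$ at $x=Q^*$ and $x=Q$.
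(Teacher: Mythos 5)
Your skeleton is exactly the paper's: the outer bound is Corollary~\ref{cor:onesided}, the inner bound is superposition (\ref{ach:wsr}) evaluated at $\tilde Q=Q^*$, and the whole theorem reduces to verifying condition (\ref{ach:cond}) at $\tilde Q = Q^*$. Your resolution of the regime $\Ex{\frac{1}{H^2}}\le \frac{1}{g^2}$ (Jensen applied to the convex, decreasing map $t\mapsto\log\frac{t+Q}{t+Q^*}$) is correct, and is in substance how the paper proves the related Theorem~\ref{cap:deg-gen}. The regime $\Ex{\frac{1}{H^2}}>\frac{1}{g^2}$, which you explicitly leave open, is therefore the genuine gap in your proposal: it is nonvacuous for one-sided non-degraded channels, and nothing you wrote covers it. Moreover, your fallback hope that the majorization condition (\ref{thm:cond2}) supplies the missing leverage cannot work: for $m=1$ that condition reduces to $T\ge T_1$ and holds automatically (that is precisely the content of the proof of Corollary~\ref{cor:onesided}), so it carries no information beyond what you already used.

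You should also know that the paper does not fill this gap either. Its entire proof is the chain
\begin{equation*}
\Ex{\log\left(\frac{1+H^2Q}{1+H^2Q^*}\right)} \ \ge\ \log \Ex{\frac{1+H^2Q}{1+H^2Q^*}} \ =\ \log\left(1+w\,\frac{Q-Q^*}{Q^*+\frac{1}{g^2}}\right) \ \ge\ \log\left(\frac{1+g^2Q}{1+g^2Q^*}\right),
\end{equation*}
whose first step is Jensen's inequality applied in the wrong direction (for the concave logarithm, $\Ex{\log Z}\le\log\Ex{Z}$). Worse, the target inequality (\ref{ach:cond}) at $\tilde Q=Q^*$ is actually false in your ``delicate'' regime: take $n=2$, $p_1=p_2=\frac12$, $\frac{1}{h_1^2}=3$, $\frac{1}{h_2^2}=\frac{1}{100}$, $w=1$, $Q=2$; then (\ref{cond:Q*}) has the interior solution $Q^*=1$ when $\frac{1}{g^2}\approx 0.613$, so the channel is one-sided and non-degraded ($h_1<g<h_2$) with $\Ex{\frac{1}{H^2}}=1.505>\frac{1}{g^2}$, yet (in natural logs) $\Ex{\log\frac{1+H^2Q}{1+H^2Q^*}}\approx 0.456 < 0.482 \approx \log\frac{1+g^2Q}{1+g^2Q^*}$; by continuity of $Q^*(w)$ the strict failure persists for $w$ slightly above $1$. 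In this situation the minimum in (\ref{ach2}) is attained by the $H$-term, so superposition at $\tilde Q=Q^*$ does not meet the outer bound. So your instinct was right, and then some: the case you could not close is exactly the point at which the paper's own argument is invalid, and Theorem~\ref{cap:onesided} as stated remains unestablished without an additional hypothesis of the kind appearing in Theorem~\ref{cap:deg-gen} (whose conclusion your ``easy case'' correctly reproves).
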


\noindent{\em Proof:} See the appendix.

There are also other (non-one sided) channels  whose capacity region can be characterized partially. This  theorem presents one such class:

\begin{thm}
\label{cap:deg-gen}
Consider a broadcast fading channel with parameters satisfying the following:
\[\displaystyle{\frac{1-p_n}{g_n^2} \ge \sum_{i=1}^{n-1} \frac{p_i}{h_i^2}}.\]

  For $w\ge 1$, if Equation (\ref{cond:Q*}) has a solution in the range $0< Q^*<Q$,  the following weighted sum rate lies on the boundary of the capacity region of this channel:
\[
R_1+wR_2= \frac{1}{2}\Ex{\log(1+H^2Q^*)} + \frac{w}{2}\Ex{\log\left(\frac{1+G^2Q}{1+G^2 Q^*}\right)}.\]

 
\end{thm}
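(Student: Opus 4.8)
The plan is to sandwich the claimed weighted sum-rate between the \korner-Marton outer bound and the superposition-coding inner bound, and show the two meet at $\tilde Q=Q^*$. For the converse, I would simply invoke Theorem~\ref{thm:deg-gen}: the hypothesis $\frac{1-p_n}{g_n^2}\ge\sum_{i=1}^{n-1}\frac{p_i}{h_i^2}$ is exactly its second fade condition, so for $w\ge 1$ and $Q^*$ solving (\ref{cond:Q*}) we immediately obtain
\[R_1+wR_2\le \frac{1}{2}\Ex{\log(1+H^2Q^*)}+\frac{w}{2}\Ex{\log\left(\frac{1+G^2Q}{1+G^2Q^*}\right)}.\]
It then remains only to exhibit a matching achievable point. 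Running the superposition scheme of Section~\ref{sec:ach} with $\tilde Q=Q^*$, the pair in (\ref{ach1})--(\ref{ach2}) attains the weighted sum-rate (\ref{ach:wsr}), which coincides with the converse expression \emph{provided} the achievability condition (\ref{ach:cond}) holds at $\tilde Q=Q^*$. Hence the entire theorem reduces to verifying (\ref{ach:cond}) at $Q^*$.

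Second, I would recast (\ref{ach:cond}) in integral form. Using $\log\frac{1+h^2Q}{1+h^2Q^*}=\int_{Q^*}^{Q}\frac{dt}{t+1/h^2}$ and averaging against the two p.m.f.'s, the inequality (\ref{ach:cond}) becomes
\[\int_{Q^*}^{Q} r(1,t)\,dt\ge 0,\]
where $r$ is the function in (\ref{rx}) at $w=1$, i.e.\ $r(1,t)=\sum_i\frac{p_i}{t+1/h_i^2}-\sum_j\frac{q_j}{t+1/g_j^2}$. Thus it suffices to show $r(1,t)\ge 0$ throughout $[Q^*,Q]$.

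The crux is to pin down the sign of $r(1,\cdot)$ on $[Q^*,Q]$ from three facts. (i) At the left endpoint, (\ref{cond:Q*}) reads $\sum_i\frac{p_i}{Q^*+1/h_i^2}=w\sum_j\frac{q_j}{Q^*+1/g_j^2}$, so $r(1,Q^*)=(w-1)\sum_j\frac{q_j}{Q^*+1/g_j^2}\ge 0$ because $w\ge 1$. (ii) The equation $r(1,t)=0$ has a unique real root $t_0$ (cf.\ \cite{Globecom}), so on $(0,\infty)$ the continuous function $r(1,\cdot)$ changes sign at most once. (iii) The fade condition controls the large-$t$ tail: since $r(1,t)\to 0$ like $\frac{1}{t^2}\big(\Ex{1/G^2}-\Ex{1/H^2}\big)$, I would show condition~2 forces $\Ex{1/G^2}\ge\Ex{1/H^2}$ — using $\sum_{i=1}^{n-1}p_i/h_i^2\le(1-p_n)/g_n^2$, the ordering $h_n>g_m$ (so $1/h_n^2<1/g_n^2$), and $\Ex{1/G^2}\ge 1/g_n^2$ — so that $r(1,t)>0$ for all large $t$. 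Combining (ii) and (iii) gives $r(1,t)\ge 0$ for every $t\ge t_0$ and $r(1,t)\le 0$ on $(0,t_0)$; then (i) forces $Q^*\ge t_0$, whence $[Q^*,Q]\subseteq[t_0,\infty)$ and $r(1,t)\ge 0$ there, which is precisely the integral inequality needed.

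Finally, with (\ref{ach:cond}) verified the inner and outer bounds coincide, so the weighted sum-rate equals the stated value; since this identity holds for the supporting line $R_1+wR_2$ with $w\ge 1$, the point lies on the boundary of the capacity region. I expect the main obstacle to be step three: extracting $\Ex{1/G^2}\ge\Ex{1/H^2}$ from the reciprocal-moment condition~2 — which is where the hypothesis $h_n>g_m$ is essential, and where virtual fades (Remark~\ref{main:rem1}) may be needed to align the supports of $H$ and $G$ and justify writing $g_n$ for the largest fade of the weaker user — and then marrying this tail estimate to the uniqueness of $t_0$ to rule out any dip of $r(1,\cdot)$ below zero inside $[Q^*,Q]$.
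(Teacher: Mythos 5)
Your overall architecture coincides with the paper's: the converse is exactly an invocation of Theorem \ref{thm:deg-gen} (its second fade condition), and achievability reduces to verifying the superposition condition (\ref{ach:cond}) at $\tilde Q=Q^*$, after which (\ref{ach:wsr}) closes the argument. Your recasting of (\ref{ach:cond}) as $\int_{Q^*}^{Q} r(1,t)\,dt\ge 0$ is correct, as are your steps (i) (that $r(1,Q^*)=(w-1)\sum_j q_j/(Q^*+1/g_j^2)\ge 0$) and (iii) (that the hypothesis together with $h_n>g_m$ and $p_n\neq 0$ gives $\Ex{1/H^2}< 1/g_m^2\le \Ex{1/G^2}$; this is precisely the paper's Equation (\ref{thmcapgen}), where the paper writes $g_n$ for the largest $G$-fade).

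The genuine gap is step (ii). The claim that $r(1,\cdot)$ has a unique real root, hence at most one sign change on $(0,\infty)$, is established nowhere: you do not prove it, and the paper does not either --- the ``unique answer in $\mathbb{R}$'' remark you are recalling sits inside a commented-out block, concerns the optimality equation of the achievable scheme, and defers to an outside reference. Moreover, as literally stated the claim is false: whenever two consecutive poles $-1/h_i^2$ and $-1/h_{i+1}^2$ of $r(1,\cdot)$ have no $g$-pole between them, both residues are positive, so $r(1,\cdot)$ runs from $+\infty$ down to $-\infty$ on that interval and has a real root there by the intermediate value theorem. At best one could hope for uniqueness of the root on $[0,\infty)$, and that is a nontrivial assertion about differences of Stieltjes transforms of two arbitrary interleaved discrete fade laws --- yet your entire sign analysis (no dip of $r(1,\cdot)$ below zero on $[Q^*,Q]$) rests on it. The paper sidesteps this global analysis entirely and proves the \emph{integrated} inequality (\ref{ach:cond}) directly, via two applications of Jensen's inequality plus monotonicity:
\[
\Ex{ \log\left(\frac{ 1 + H^2 Q}{1+H^2 Q^*}\right)} \ge \log\left(1+ \frac{Q-Q^*}{Q^*+\Ex{\frac{1}{H^2}}}\right) \ge \log\left(\frac{ 1 + g_m^2 Q}{1+g_m^2 Q^*}\right) \ge \Ex{\log\left(\frac{ 1 + G^2 Q}{1+G^2 Q^*}\right)},
\]
where the middle inequality uses $\Ex{1/H^2}\le 1/g_m^2$ and the last uses that $g\mapsto \log\frac{1+g^2Q}{1+g^2Q^*}$ is increasing for $Q>Q^*$. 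If you replace your steps (ii)--(iii) by this chain --- noting that you only ever need the integral of $r(1,\cdot)$ to be nonnegative, not its pointwise nonnegativity --- your proof becomes complete, and essentially identical to the paper's.
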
 

\noindent{\em Proof:} See the appendix.


\bibliographystyle{plain} 
\bibliography{main_ref} 

\begin{thebibliography}{1}

\bibitem{ALG05}
R.~Agarwal, Y.~Liang, and A.~Goldsmith.
\newblock Capacity of fading broadcast channels with transmitter ordering
  {CSI}.
\newblock {\em Allerton Comm. Conf.}, Sep. 2005.

\bibitem{ahlswede74_2}
R.~Ahlswede.
\newblock Multi-way communication channels.
\newblock {\em Int. Symp. on Info. Theory (ISIT)}, 2:23--52, 1974.

\bibitem{AEA10}
A.~Aminzadeh Gohari, A.~El Gamal, and V.~Anantharam.
\newblock On an outer bound and an inner bound for the general broadcast
  channel.
\newblock {\em Int. Symp. on Info. Theory (ISIT)}, Jul. 2010.

\bibitem{GL01}
A.~Goldsmith and L.~Li.
\newblock Capacity and optimal resource allocation for fading broadcast
  channels: Part i ergodic capacity.
\newblock {\em IEEE Trans. on Info. Theory}, Mar. 2001.

\bibitem{JV09}
A.~Jafarian and S.~Vishwanath.
\newblock On the capacity of one-sided two user gaussian.
\newblock {\em Globecom Comm. Conf.}, Dec. 2009.

\bibitem{JG04}
N.~Jindal and A.~Goldsmith.
\newblock Capacity and dirty paper coding for {G}aussian broadcast channels
  with common information.
\newblock {\em Int. Symp. on Info. Theory (ISIT)}, Jun 2004.

\bibitem{M09}
K.~Marton.
\newblock A coding theorem for the discrete memoryless broadcast channel.
\newblock {\em IEEE Trans. on Info. Theory}, May 1979.

\bibitem{SKC08}
X.~Shang, G.~Kramer, and B.~Chen.
\newblock New outer bounds on the capacity region of {G}aussian interference
  channels.
\newblock {\em Available at http://arxiv.org/abs/0801.2183}, Jan. 2008.

\bibitem{TYL08}
D.~Tse, R.~Yates, and Z.~Li.
\newblock Fading broadcast channels with state information at the receivers.
\newblock {\em Allerton Comm. Conf.}, Sep. 2008.

\end{thebibliography}

\appendix

\begin{proof}[Proof of Theorem \ref{main:thm}]
 Consider again the optimization problem given by:


\eq{ R_1+w R_2\le I(X;Y_1|U,H)+w I(U;Y_2|G),} 
for all $w\ge 1$.

We can simplify the right  hand side of this inequality as:

\begin{align}
 I(X;Y_1|U,H)+&w I(U;Y_2|G) \notag\\
=&h(Y_1|U, H) - h(Y_1|X,U, H) + \notag\\
& w h(Y_2|G) - w h(Y_2|U, G) \notag\\
\le & h(Y_1|U, H) - w h(Y_2|U, G) + C, \label{Cupper}
\end{align}
where $C$ is a constant given by
\[
C \triangleq w \frac{1}{2} \Ex{ \log \left(2 \pi e \left(Q + \frac{ 1}{G^2}\right)\right)} - \frac{1}{2} \Ex{\log \left(2 \pi e\frac{1}{H^2}\right)}
\]

Note that this $C$ is an upper bound on $w h(Y_2|G) -  h(Y_1|X,U, H)$, using the fact that Gaussians maximize entropy given a second moment constraint.

Thus, from Equation (\ref{Cupper}), finding an outer bound for the weighted sum-rate $R_1+wR_2$ reduced to upper-bound the $h(Y_1|U,H)-wh(Y_2|U,G)$. We can further simplify this quantity as the following:
\begin{align}
h(Y_1|U,H)-wh(Y_2|U,G)=&\sum_{i=1}^n p_i h(Y_1|U,H=h_i)\notag\\
&-w\sum_{j=1}^m q_j h(Y_2|U,G=g_j) \label{Hwsr}
\end{align}

To simplify the math, throughout this section, we use the following notation:
\[a_i=Q^*+\frac{1}{h_i^2}, \forall i\in\{1,2,\dots,n\},\]
and
\[b_j=Q^*+\frac{1}{g_j^2}, \forall j\in\{1,2,\dots,m\}.\]

And therefore,
\[T\triangleq \sum_{i=1}^n\frac{p_i}{a_i}=w\sum_{j=1}^m \frac{q_j}{b_j}\]

In order to find an upperbound for (\ref{Hwsr}), we use a modified version of Costa's EPI as stated in the next lemma.

\begin{lem}
\label{lem:epi}
$f(t)=N\left(X+\sqrt{t}Z|U\right)=2^{2 h\left(X+\sqrt{t}Z|U\right)}$ is a concave function of $t$, where $Z$ is a Gaussian random variable independent of $X$ and $U$.
\end{lem}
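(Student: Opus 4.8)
The plan is to reduce the claimed concavity of $f(t)=N(X+\sqrt{t}Z\mid U)$ to a differential inequality on the \emph{conditional Fisher information}, and then to obtain that inequality by applying Costa's original (unconditional) result separately for each value of the conditioning variable and averaging. Write $Y_t = X+\sqrt{t}Z$ and note that, since $2^{2h}=e^{2h_{\mathrm{nats}}}$, the entropy power is independent of the logarithm base; I will work in nats and set $f(t)=e^{2h(Y_t\mid U)}$. Differentiating twice gives $f'=2h'(Y_t\mid U)\,f$ and $f''=2f\big(h''(Y_t\mid U)+2(h'(Y_t\mid U))^2\big)$, so that, since $f>0$, the property $f''\le 0$ is equivalent to the pointwise-in-$t$ inequality $h''(Y_t\mid U)+2\big(h'(Y_t\mid U)\big)^2\le 0$, where primes denote $\tfrac{d}{dt}$.

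The first key step is the conditional de Bruijn identity. For each fixed value $u$, applying the scalar de Bruijn identity to the law of $X$ given $U=u$ yields $\tfrac{d}{dt}h(Y_t\mid U=u)=\tfrac12 J(Y_t\mid U=u)$, where $J$ denotes Fisher information with respect to a location shift. Averaging over $U$ gives $h'(Y_t\mid U)=\tfrac12 J(Y_t\mid U)$, with $J(Y_t\mid U)\triangleq \mathbb{E}_U\!\big[J(Y_t\mid U=u)\big]$, and differentiating once more gives $h''(Y_t\mid U)=\tfrac12 \tfrac{d}{dt}J(Y_t\mid U)$. Substituting into the equivalent condition above, concavity of $f$ reduces to the conditional Fisher-information inequality
\[
\frac{d}{dt}J(Y_t\mid U)\le -\big(J(Y_t\mid U)\big)^2 .
\]

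To establish this I would exploit that conditioning behaves like averaging. For each fixed $u$, $Y_t\mid U=u$ is a random variable plus an independent Gaussian $\sqrt{t}Z$, so Costa's unconditional Fisher-information inequality applies and gives $\tfrac{d}{dt}J(Y_t\mid U=u)\le -\big(J(Y_t\mid U=u)\big)^2$. Averaging over $U$ and invoking Jensen's inequality for the convex map $x\mapsto x^2$,
\[
\frac{d}{dt}J(Y_t\mid U)=\mathbb{E}_U\!\Big[\tfrac{d}{dt}J(Y_t\mid U=u)\Big]\le -\,\mathbb{E}_U\!\big[(J(Y_t\mid U=u))^2\big]\le -\big(\mathbb{E}_U[J(Y_t\mid U=u)]\big)^2=-\big(J(Y_t\mid U)\big)^2,
\]
which is exactly the required inequality; this closes the argument up to regularity.

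The main obstacle is regularity rather than the inequalities themselves. For $t>0$ the Gaussian smoothing renders the conditional densities of $Y_t$ smooth and the relevant integrals finite, but one must justify differentiating $h(Y_t\mid U)$ and $J(Y_t\mid U)$ under the expectation over $U$ (and under the integral defining entropy and Fisher information), verify that Costa's per-$u$ inequality is uniform enough in $u$ to be integrated, and control the boundary behaviour as $t\downarrow 0$ so that concavity on $(0,\infty)$ extends to the closed half-line as used in the sequel. I would dispatch these points by restricting attention to $t>0$, using dominated convergence with Gaussian tail bounds to interchange differentiation and expectation, and appealing to continuity to extend to $t=0$.
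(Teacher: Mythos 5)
Your proof is correct, but note a point of comparison that matters here: the paper never actually proves Lemma \ref{lem:epi} --- it is stated without proof, as ``a modified version of Costa's EPI,'' and used as a black box in the proof of Theorem \ref{main:thm}. Your argument therefore supplies a derivation the paper omits, and it is the standard one for this conditional extension: reduce concavity of $2^{2h(X+\sqrt{t}Z\mid U)}$ to the differential inequality $\tfrac{d}{dt}J(Y_t\mid U)\le -\bigl(J(Y_t\mid U)\bigr)^2$ via the conditional de Bruijn identity, obtain the per-realization inequality $\tfrac{d}{dt}J(Y_t\mid U=u)\le -\bigl(J(Y_t\mid U=u)\bigr)^2$ from Costa's unconditional theorem (the two forms are equivalent through the same computation you use, since $N''=N(J'+J^2)$ and $N>0$), and then average over $U$ using Jensen's inequality $\mathbb{E}[W^2]\ge (\mathbb{E}[W])^2$ --- the crucial structural point being that the quadratic nonlinearity has the right convexity direction, so conditioning is ``free.'' Your calculus identities ($f'=2h'f$, $f''=2f(h''+2(h')^2)$) and the direction of every inequality check out. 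Two remarks on the regularity caveats you flag: (i) for $t\ge \epsilon>0$ the domination needed to differentiate under $\mathbb{E}_U$ is immediate from the uniform bound $J(Y_t\mid U=u)\le 1/t$ (the Fisher information of anything plus an independent $\mathcal{N}(0,t)$ variable is at most $1/t$), so this is genuinely routine; and (ii) the extension to $t=0$ is not needed for the paper's application, since the lemma is only invoked at the strictly positive values $t=1/g_j^2$ and $t=1/h_i^2$, so you may simply work on $(0,\infty)$ and drop that concern.
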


Here we note that \eq{\label{hy2|u}h(Y_2|U,G=g_i)=h\left(X+\frac{Z}{g_i}\Big|U\right),} 
and therefore we can use Lemma \ref{lem:epi} and Jensen's inequality to upper-bound the right hand side of Equation (\ref{Hwsr}) as we address next.
As indicated in the Theorem \ref{thm1} statement, we assume the fades at each receiver are sorted by the index, i.e.:
\[h_1< h_2 < ...<h_n,\]
and 
\[g_1<g_2<...<g_m.\]


 
 
In order to use Lemma \ref{lem:epi} we need to define  matrix $A$, as the following:

\begin{defin}
\label{def:A}
Let $A$ be an $n\times m$ matrix of non-negative entries satisfying the first two conditions of  each cases of Theorem \ref{main:thm}.

Equivalently:
\begin{enumerate}
\item Sum of all the entries of each column of $A$ is one.
\item \[\left[\frac{1}{h_1^2} , \frac{1}{h_{2}^2} , \dots , \frac{1}{h_{n}^2}\right] A=\left[\frac{1}{g_1^2} , \frac{1}{g_2^2} , \dots , \frac{1}{g_m^2}\right].\]   
\end{enumerate}
\end{defin}

In fact, column $i$ of $A$ represents the convex coefficients of writing $g_j^{-2}$ in terms of $h_i^{-2}$'s.  Next, let \eq{\label{fi_assign} f_i=h(Y_1|U,H=h_i)=h\left(X+\frac{Z}{h_i}\Big|U\right).} 

From Lemma \ref{lem:epi}, and Jensen's inequality we can lower bound $h(Y_2|U,G=g_j)$ as the following:

\begin{align}
h(Y_2|U,G=g_j)&=h\left(X+\frac{Z}{g_i}\Bigg| U\right) \notag\\
&= \frac{1}{2}\log\left(N\left( X+\sqrt{\frac{1}{g_j^2}}Z\bigg|U\right)\right) \notag\\
&=\frac{1}{2}\log\left(N\left( X+\sqrt{\sum_{i=1}^n A_{ij} \frac{1}{h_i^2}}Z\bigg|U\right)\right) \label{a1}\\
&\ge \frac{1}{2}\log\left(\sum_{i=1}^n A_{ij} N\left( X+ \sqrt{\frac{1}{h_i^2}}Z\bigg|U\right)\right) \label{a2}\\
&= \frac{1}{2}\log\left(\sum_{i=1}^n A_{ij} 2^{2f_i}\right), \label{a3}
\end{align}
where (\ref{a1}) follows from Definition \ref{def:A}, (\ref{a2}) holds because of Jensen's inequality and (\ref{a3}) follows from the definition of $f_i$ in Equation (\ref{fi_assign}). Therefore, we lower bound $h(Y_2|U,G=g_j)$ as stated above for all $j$'s which gives a lower bound for Equation (\ref{Hwsr}) as the following:

\begin{align}
h(Y_1|U,F)-&wh(Y_2|U,F)\le \notag\\ &\sum_{i=1}^n p_i f_i - \frac{w}{2} \sum_{j=1}^m q_j \log\left(\sum_{i=1}^n A_{ij} 2^{2f_i}\right).\label{wsr_upper1}
\end{align}

%

Therefore any matrix $A$, satisfying conditions of Definition \ref{def:A}, gives an upper-bound on the capacity region of this channel. 

Next, we investigate to find the tightest bound in this class. Note that in the right hand side of Equation (\ref{wsr_upper1}), there are variables $f_i$ that we know corresponds to a valid entropy. We  capture this fact by imposing some constraints to $f_i$'s and instead removing the condition of $f_i$'s to be entropy functions\footnote{Condition of being in the form of Entropy function is a very sophisticated constraint and the main reason that this problem is still open!}. Next lemma gives one of such constraints:

\begin{lem}
\label{lem:epi2}
Consider $f_i$'s as defined in Equation (\ref{fi_assign}). Following relation holds for them:
\[\forall i<j \ : \ 2^{2f_i}-\frac{2\pi e}{h_i^2}\ge 2^{2f_j}-\frac{2\pi e}{h_j^2} \]  
\end{lem}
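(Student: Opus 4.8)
The plan is to recast the claim as a single monotonicity statement about the function appearing in Lemma~\ref{lem:epi}. Write $g(t) \triangleq N\!\left(X+\sqrt{t}Z \mid U\right) = 2^{2h(X+\sqrt{t}Z\mid U)}$. From the definition of $f_i$ in Equation~(\ref{fi_assign}) we have $2^{2f_i} = g\!\left(\tfrac{1}{h_i^2}\right)$, so the asserted inequality is exactly
\[
g\!\left(\tfrac{1}{h_i^2}\right) - 2\pi e\,\tfrac{1}{h_i^2} \;\ge\; g\!\left(\tfrac{1}{h_j^2}\right) - 2\pi e\,\tfrac{1}{h_j^2}.
\]
Since the fades are sorted so that $i<j$ forces $h_i<h_j$, i.e. $\tfrac{1}{h_i^2} > \tfrac{1}{h_j^2}$, it suffices to prove that $\varphi(t) \triangleq g(t) - 2\pi e\, t$ is non-decreasing on $[0,\infty)$; evaluating $\varphi$ at the larger argument $\tfrac{1}{h_i^2}$ and the smaller argument $\tfrac{1}{h_j^2}$ then yields the lemma.

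The cleanest way I would establish that $\varphi$ is non-decreasing is a direct increment form of the (conditional) entropy power inequality. For $t_2 > t_1$, use the fact that $\sqrt{t_1}Z_1 + \sqrt{t_2-t_1}Z_2 \sim \mathcal{N}(0,t_2)$ for independent standard Gaussians $Z_1,Z_2$, so that $X+\sqrt{t_2}Z$ has the same conditional distribution (given $U$) as $\bigl(X+\sqrt{t_1}Z_1\bigr) + \sqrt{t_2-t_1}Z_2$, with the two summands independent given $U$. Applying the conditional EPI in the paper's $N=2^{2h}$ normalization gives
\[
g(t_2) = N\!\left(X+\sqrt{t_2}Z\mid U\right) \ge N\!\left(X+\sqrt{t_1}Z\mid U\right) + N\!\left(\sqrt{t_2-t_1}Z_2\right) = g(t_1) + 2\pi e\,(t_2-t_1),
\]
where $N(\sqrt{t_2-t_1}Z_2)=2\pi e(t_2-t_1)$ since a Gaussian of variance $t_2-t_1$ has $2^{2h} = 2\pi e(t_2-t_1)$. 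Rearranging gives $\varphi(t_2)\ge \varphi(t_1)$, which is precisely the desired monotonicity.

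If one prefers to stay strictly within Lemma~\ref{lem:epi}, the same conclusion follows from concavity together with the asymptotic slope of $g$. Concavity makes $g'$ non-increasing, so $g'(t)\ge L \triangleq \lim_{s\to\infty} g'(s)$ for every finite $t$, and hence $\varphi'(t)=g'(t)-2\pi e \ge L - 2\pi e$; the point is that $L=2\pi e$ exactly. To see this, write $X+\sqrt{t}Z = \sqrt{t}\bigl(Z + X/\sqrt{t}\bigr)$, so $h(X+\sqrt{t}Z\mid U)=\tfrac12\log_2 t + h\bigl(Z+X/\sqrt{t}\mid U\bigr)$, and as $t\to\infty$ the vanishing perturbation $X/\sqrt{t}$ sends the conditional entropy to $h(Z)=\tfrac12\log_2(2\pi e)$, giving $g(t)=2\pi e\,t\,(1+o(1))$ and $L=2\pi e$. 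This also explains why the subtracted constant is $2\pi e$ rather than any other value. The main obstacle in this second route is exactly this limit: justifying $h\bigl(Z+X/\sqrt{t}\mid U\bigr)\to h(Z)$ needs a continuity argument for differential entropy under the vanishing perturbation, with enough uniformity over the conditioning on $U$. The first route avoids this entirely, so I would present the EPI increment argument as the primary proof and relegate the concavity-plus-limit computation to a remark.
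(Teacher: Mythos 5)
Your primary argument is exactly the paper's own proof: the paper likewise writes the larger-variance observation $X+Z/h_i$ (given $U$) as $\bigl(X+Z/h_j\bigr)$ plus an independent Gaussian of variance $\tfrac{1}{h_i^2}-\tfrac{1}{h_j^2}$, applies the conditional EPI, and evaluates the Gaussian increment's entropy power as $2\pi e\bigl(\tfrac{1}{h_i^2}-\tfrac{1}{h_j^2}\bigr)$. Your packaging of this as monotonicity of $\varphi(t)=g(t)-2\pi e\,t$, and the supplementary concavity-plus-limit remark, do not change the substance, so the proposal is correct and essentially identical to the paper's proof.
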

\begin{proof}
This follows from  EPI. Assume $i<j$, we know $\frac{1}{h_i^2}>\frac{1}{h_j^2}$. Thus, 
\begin{align*}
2^{2f_i}&=2^{2 h\left( X+\frac{Z}{h_i} \big| U \right)}=2^{2 h\left( X+\frac{Z}{h_j} +Z\sqrt{\frac{1}{h_i^2}-\frac{1}{h_j2}} \big| U \right)}\\
&\ge 2^{2 h\left( X+\frac{Z}{h_j} \big| U \right)} + 2^{2 h\left( Z\sqrt{\frac{1}{h_i^2}-\frac{1}{h_j^2}}\big| U \right)} \tag{a}\\
&= 2^{2f_j} + 2^{\log\left(2\pi e \left(\frac{1}{h_i^2} -\frac{1}{h_j^2}\right)\right)}\tag{b}\\
&=2^{2f_j} + \frac{2\pi e}{h_i^2} -\frac{2\pi e}{h_j^2},
\end{align*}
where (a) and (b) follows respectively from EPI, and independence of $U$ and $Z$. This concludes the proof of the Lemma.
\end{proof}

Next, we write the final optimization problem we would like to solve:

\begin{align}
\min_{A_{ij}} \max_{f_i} \ &\sum_{i=1}^{n} p_i f_i -\frac{w}{2} \sum_{j=1}^m q_j \log\left(\sum_{i=1}^n A_{ij} 2^{2f_i}\right) \label{opt}\\
s.t: \ \ & 1- \forall j, \  \sum_{i} A_{ij}=1 \notag\\
 & 2- \forall j, \ \sum_{i} \frac{A_{ij}}{h_i^2}=\frac{1}{g_j^2}\notag\\
& 3- \forall i, \ \frac{1}{2}\log\left( \frac{2 \pi e}{h_i^2}\right) \le f_i\le \frac{1}{2}\log\left( 2 \pi e \left(Q+\frac{1}{h_i^2}\right)\right)\notag\\
& 4-  \forall i<k  , \ 2^{2f_i}-\frac{2\pi e}{h_i^2}\ge 2^{2f_k}-\frac{2\pi e}{h_k^2}\notag
\end{align}

Note that, constraint 1 and 2 follows from Definition (\ref{def:A}), constraint 3 comes from the fact that entropy function is minimized if $X$ is deterministic and upper bounded by the entropy of a Gaussian random variable with the same second moment, and finally constraint 4 follows from Lemma (\ref{lem:epi}). It is also worth mentioning that constraint 3, can be obtained from the following two constraints in addition to constraint 4:
\[f_{n} \ge \frac{1}{2}\log\left( \frac{2 \pi e}{h_{n}^2}\right),\]
and
\[ f_1 \le  \frac{1}{2}\log\left( 2 \pi e \left(Q+\frac{1}{h_1^2}\right)\right).\]

Solving this min max problem gives the best outer-bound for the capacity region of the fading broadcast channel when $w>1$. Although solving this optimization problem is still not an easy task, we know we can find an outerbound for this capacity region by fixing an $A$ that satisfies the conditions 1 and 2 and solving the following problem instead:
 
\begin{align}
 \max_{f_i} \ &\sum_{i=1}^{n} p_i f_i -\frac{w}{2} \sum_{j=1}^m q_j \log\left(\sum_{i=1}^n A_{ij} 2^{2f_i}\right), \label{opt2}\\
s.t: \ \ 
& 1-f_{n} \ge \frac{1}{2}\log\left( \frac{2 \pi e}{h_{n}^2}\right) \notag\\
& 2-f_1 \le  \frac{1}{2}\log\left( 2 \pi e \left(Q+\frac{1}{h_1^2}\right)\right)\notag\\
& 3-  \forall i<k  , \ 2^{2f_i}-\frac{2\pi e}{h_i^2}\ge 2^{2f_k}-\frac{2\pi e}{h_k^2}\notag
\end{align}
when we know $A$ satisfies the conditions given by Definition \ref{def:A} or equivalently conditions 1 and 2 in the optimization problem given by (\ref{opt}).
In the next lemma we show this problem is convex with respect to $(f_1,f_2,\dots,f_n)$.

\begin{lem}
\label{lem:convex}
Objective function of optimization problem (\ref{opt2}) is convex.
\end{lem}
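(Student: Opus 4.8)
The plan is to establish the claimed convexity by isolating the only nonlinear part of the objective and recognizing it as a sum of log-sum-exp functions of affine arguments. Since problem (\ref{opt2}) is stated as a maximization, asserting that it is a \emph{convex} problem is the same as asserting that its negated (standard minimization) objective
\[
g(f_1,\dots,f_n) \triangleq \frac{w}{2}\sum_{j=1}^m q_j \log\left(\sum_{i=1}^n A_{ij}\, 2^{2f_i}\right) - \sum_{i=1}^n p_i f_i
\]
is convex in $(f_1,\dots,f_n)$; this minimization form is the sense in which the objective is convex, and it is $g$ that I would prove convex.

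First I would split $g$ into its affine part and its coupling part. The term $-\sum_{i} p_i f_i$ is affine in $(f_1,\dots,f_n)$ and hence convex (it contributes nothing to the Hessian). It then suffices to show that each coupling term $\ell_j(f) \triangleq \log\left(\sum_{i=1}^n A_{ij}\, 2^{2f_i}\right)$ is convex: because the weights $\tfrac{w}{2}q_j$ are nonnegative (as $w\ge 1$ and $q_j\ge 0$) and a nonnegative combination of convex functions is convex, the sum $\tfrac{w}{2}\sum_j q_j \ell_j$ is convex, and adding the affine term preserves convexity, giving convexity of $g$.

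The core of the argument is the convexity of $\ell_j$. I would rewrite $2^{2f_i} = \exp\!\big((2\ln 2)\, f_i\big)$, so that
\[
\ell_j(f) = \log\left(\sum_{i=1}^n A_{ij}\,\exp\!\big((2\ln 2)\, f_i\big)\right),
\]
which is exactly a log-sum-exp of the affine maps $f\mapsto (2\ln 2) f_i$ with the fixed nonnegative coefficients $A_{ij}$ supplied by Definition \ref{def:A}. Log-sum-exp is convex, which I would verify directly by computing the Hessian: up to the positive chain-rule factor $(2\ln 2)^2$ it equals $\frac{1}{(\mathbf{1}^\top z)^2}\big((\mathbf{1}^\top z)\,\mathrm{diag}(z)-z z^\top\big)$ with $z_i = A_{ij}\,2^{2f_i}\ge 0$, and this matrix is positive semidefinite by Cauchy--Schwarz. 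Columns with $A_{ij}=0$ simply drop out of the sum, and by Definition \ref{def:A} each column has at least one positive entry, so the logarithm is well defined.

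I expect the main obstacle to be sign bookkeeping rather than any deep inequality. The terms $\ell_j$ are convex, so in the maximization form of (\ref{opt2}) they carry the coefficient $-\tfrac{w}{2}q_j$, whereas in the equivalent minimization objective $g$ they carry $+\tfrac{w}{2}q_j$; it is the latter arrangement that yields convexity, and the lemma's claim is to be read in this convex-program sense. The only side condition to confirm is that the fixed matrix $A$ keeps every coefficient nonnegative, which is guaranteed since $A$ is taken with nonnegative entries satisfying the first two conditions of Definition \ref{def:A}. With that in hand, convexity of $g$---and hence that (\ref{opt2}) is a convex optimization problem---follows at once from closure of convexity under nonnegative combinations and addition of affine terms.
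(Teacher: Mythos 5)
Your proof is correct, and it reaches the paper's conclusion by a cleaner, more modular route. The paper proves the lemma by writing down the full Hessian $H$ of the (negated) objective in terms of the ratios $e_{lj}=A_{lj}2^{2f_l}/\sum_{i=1}^n A_{ij}2^{2f_i}$, decomposing it as $H=w\sum_j q_j\sum_i e_{ij}ME^{ij}$ for bespoke matrices $ME^{ij}$, and certifying each $ME^{ij}$ positive semidefinite through the sum-of-squares identity $x^T ME^{ij} x=\sum_{t\neq i}e_{tj}(x_t-x_i)^2\ge 0$. You never form the aggregated Hessian: you split off the affine term, recognize each coupling term $\log\left(\sum_i A_{ij}2^{2f_i}\right)$ as log-sum-exp composed with affine maps, verify the standard log-sum-exp Hessian is positive semidefinite via Cauchy--Schwarz, and finish with closure of convexity under nonnegative combinations. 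The two arguments rest on the same second-order inequality (your Cauchy--Schwarz step and the paper's sum-of-squares identity are equivalent forms of Lagrange's identity), but yours buys brevity and reuse of a textbook fact, while the paper's buys an explicit, self-contained PSD certificate for the whole Hessian. One further point in your favor: you make explicit the sign convention that the paper leaves implicit. Problem (\ref{opt2}) is a maximization, and the Hessian the paper exhibits (positive diagonal, negative off-diagonals) is in fact the Hessian of the \emph{negated} objective; so the lemma's claim of ``convexity'' is properly read, exactly as you do, as concavity of the maximization objective, i.e., convexity of the equivalent minimization form, which is what makes (\ref{opt2}) a convex program whose KKT conditions (used later in the paper) are sufficient for optimality.
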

\begin{proof}[Proof of Lemma \ref{lem:convex}]
We show the Hessian of objective function is positive semi-definite.
Let $H$ be the Hessian. Let \[e_{lj}= \frac{\displaystyle A_{lj} 2^{2f_l}}{\displaystyle\sum_{i=1}^n A_{ij}2^{2 f_i}}\]
$H$ can be written as:

\[H_{lk}=\left\{ \begin{array}{cc}
-2w \displaystyle\sum_{j=1}^m q_j e_{lj} e_{kj} & l\neq k\\
\\
2w \displaystyle\sum_{j=1}^m q_j e_{lj} \sum_{i=1,i\neq l}^n e_{ij} & l=k
\end{array}\right.\]

Define the following matrices:

\[ME^{ij}_{lk}=\left\{ \begin{array}{cc}
\displaystyle \sum_{t=1,i\neq l}^n e_{tj} & l=k=i\\

\displaystyle  e_{lj} & l=k\neq i\\

- \displaystyle e_{lj} & l\neq k=i\\

- \displaystyle e_{kj} & l=i\neq k\\
\end{array}\right.\]

$H$ can be written as the sum of $ME^{ij}$'s as the following:

\eq{\label{H:ME}
H= w \sum_{j=1}^m q_j\sum_{i=1}^n e_{ij} ME^{ij}.
}

Since $w, q_j$ and $e_{ij}$ are positive, in order to show that $H$ is positive semi definite, it is enough to show $ME^{ij}$ is positive semi definite. Let $x$ be a vector of size $n$, we show $x^T ME^{ij} x \ge 0$. This quantity simplifies as:
\[x^T ME^{ij} x = \sum_{t=1, t\neq i}^n e_{tj} (x_{t}-x_{i})^2 \ge 0.\]

This completes the proof.
\end{proof}

We prove the first case of Theorem \ref{main:thm} the other two cases can be proved in a similar way. We show $f_i=\frac{1}{2}\log(2\pi e(Q^*+\frac{1}{h_i^2}))$ is the optimal allocation for optimization problem (\ref{opt2}),  if we use matrix $A$ satisfies the conditions given in Theorem \ref{main:thm}. 
For problem (\ref{opt2}), we write the Lagrangian. Let the {\it positive} dual variables be $\mu$ for the first constraint, $\eta$ for the second constraint and $\lambda_1,\dots,\lambda_{n-1}$ for the third condition\footnote{Note that in the condition 2, it is enough to consider $k=i+1$ for $i=1,\dots,n-1$}. 
 KKT conditions for the optimal allocation are as the following:
\begin{align}
i=1 \ : & \ p_1-\sum_{j=1}^m\frac{\displaystyle{w q_j A_{1j} 2^{2f_1}}}{\displaystyle{\sum_{i=1}^n A_{ij} 2^{2f_i}}} - \mu +\lambda_1 2^{2f_1}=0 \label{cond1}\\
i=n \ : & \ p_n-\sum_{j=1}^m\frac{\displaystyle{w q_j A_{nj} 2^{2f_n}}}{\displaystyle{\sum_{i=0}^n A_{ij} 2^{2f_i}}} +\eta -\lambda_{n-1}2^{2f_n}=0\label{cond2}\\
1 < i<n \ : & \ p_i-\sum_{j=1}^m\frac{\displaystyle{w q_j A_{ij} 2^{2f_i}}}{\displaystyle{\sum_{i=1}^n A_{ij} 2^{2f_i}}} + (\lambda_i-\lambda_{i-1}) 2^{2f_i}=0\label{cond3}\\
& \ \eta \left(f_n- \frac{1}{2}\log\left( \frac{2 \pi e}{h_n^2}\right)\right)=0\label{cond4}\\
& \ \mu \left(\frac{1}{2}\log\left( 2 \pi e \left(Q+\frac{1}{h_1^2}\right)\right)-f_1\right)=0\label{cond5}\\
 1< i< n \ : & \ \lambda_{i-1}\left( 2^{2f_{i-1}}-\frac{2\pi e}{h_{i-1}^2}-2^{2f_i}+\frac{2\pi e}{h_i^2} \right)=0\label{cond6}
\end{align}

Next, we show the following satisfies these constraint and therefore is the optimal solution for this optimization problem:
\eq{ f_i=\frac{1}{2}\log\left(2\pi e\left(Q^*+\frac{1}{h_i^2}\right)\right)\label{assign1}.}

Note that with these $f_i$'s, Equation (\ref{cond6}), immediately follows. From Equations  (\ref{cond4}, \ref{cond5}), and definition of $f_i$'s as given in (\ref{assign1}), $\mu=\eta=0$.

From the conditions on matrix $A$ in Theorem \ref{main:thm} and Equation (\ref{assign1}), it is easy to obtain: \eq{\label{p1} \sum_{i=1}^n A_{ij} 2^{2f_i}=Q^*+\frac{1}{g_j^2},}
therefore, Equation (\ref{cond1}) is equivalent to show:
\eq{\label{p2}w \sum_{j=1}^m\frac{A_{1j}q_j}{Q^*+\frac{1}{g_j^2}}\ge  \frac{p_1}{Q^*+\frac{1}{h_1^2}}.}

%

Conditions given by Equation (\ref{cond3}), can be simplified as above to get:
\[\sum_{i=1}^k \frac{p_i}{Q^*+\frac{1}{h_i^2}} \le w \sum_{j=1}^m\left(\frac{q_j}{Q^*+\frac{1}{g_j^2}}\sum_{i=1}^k A_{ij}\right).\]

These conditions holds from The third conditions of each case in Theorem \ref{main:thm}.

Finally Equation (\ref{cond2}), is simplified as:
\[\sum_{i=1}^n \frac{p_i}{Q^*+\frac{1}{h_i^2}} = w \sum_{j=1}^m\frac{q_j}{Q^*+\frac{1}{g_j^2}},\]
that follows from the assumption of the first case in the theorem. Thus, all the KKT conditions are satisfied and therefore $f_i$'s as defined in Equation (\ref{assign1}) is the solution for the Optimization problem (\ref{opt2}).
\end{proof}

\begin{proof}[Proof of Theorem \ref{thm1}]
We construct a matrix $A$ satisfying the conditions of the first case of Theorem \ref{main:thm}. We add a virtual fade $h_0$ with zero probability of occurrence. Later, we let it go to zero. 
\

Consider the following $(n+1)\times m$ matrix $A$:
\eq{\label{matrix:A}
\hspace{-.15in}A^T=\left[\begin{array}{cccccc}
\beta_1 & \alpha_1 \frac{p_1 b_1}{wq_1 a_1} & \alpha_1 \frac{p_2 b_1}{wq_1 a_2} & \dots & \alpha_1 \frac{p_{n-1} b_1}{wq_1a_{n-1}} & \gamma_1 \\
\beta_2 & \alpha_2 \frac{p_1b_2}{wq_2a_1} & \alpha_2 \frac{p_2b_2}{wq_2a_2} & \dots & \alpha_2 \frac{p_{n-1}b_2}{wq_2a_{n-1}} & \gamma_2 \\
\vdots & \vdots & \vdots& \ddots & \vdots &\vdots\\
\beta_m & \alpha_m \frac{p_1b_m}{wq_ma_1} & \alpha_2 \frac{p_2b_m}{wq_ma_2} & \dots & \alpha_2 \frac{p_{n-1}b_m}{wq_ma_{n-1}} & \gamma_m 
\end{array}\right]}
where:
\eq{\label{betam}
\beta_j=\frac{b_j-a_n+\frac{\alpha_j b_j}{w q_j}(a_nT-1)}{a_0-a_n},}
and
\eq{\label{gammam}
\gamma_j=\frac{a_0-b_j+\frac{\alpha_jb_j}{wq_j}\left(1-p_n-a_0T+\frac{p_n a_0}{a_n}\right)}{a_0-a_n}.}

Note that with these choice of $\beta_j$ and $\gamma_j$, for all $\alpha_j$'s, matrix $A$ satisfies all the conditions of Theorem \ref{main:thm}.
In the next lemma, we show existence of $\alpha_j$'s that make all the entries of matrix $A$  positive.

\begin{lem}
\label{existence:alphas}
If Condition (\ref{thm:cond2}) holds, there exists positive $\alpha_j$'s such that $\sum_{j=1}^m \alpha_j=1$ and the matrix $A$ as defined in Equation (\ref{matrix:A}) has positive entries.
\end{lem}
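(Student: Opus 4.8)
The plan is to reduce the positivity of every entry of $A$ in (\ref{matrix:A}) to three separate sign requirements, and then to show that Condition (\ref{thm:cond2}) is exactly what makes them jointly satisfiable on the simplex $\sum_j \alpha_j = 1$. Since the interior entries of column $j$ are $\alpha_j \frac{p_i b_j}{w q_j a_i}$ with all remaining factors positive, those entries are positive precisely when $\alpha_j > 0$. As already observed just before the lemma, the prescribed $\beta_j$ in (\ref{betam}) and $\gamma_j$ in (\ref{gammam}) make the first two conditions of Theorem \ref{main:thm} hold for \emph{every} choice of $\alpha_j$; hence the only work left is to exhibit positive $\alpha_j$ with $\sum_j \alpha_j = 1$ for which $\beta_j > 0$ and $\gamma_j > 0$ as well.

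First I would turn the two remaining sign requirements into explicit upper bounds on each $\alpha_j$. Because the fades are sorted, $a_n$ is the smallest of the $a_i$, so $a_n < b_j < a_0$ for all $j$ and $a_n T = \sum_i p_i (a_n/a_i) \le 1$. Using these, the numerator of $\beta_j$ is linear and decreasing in $\alpha_j$, giving $\beta_j > 0 \iff \alpha_j < U_j^\beta$ with $U_j^\beta = \frac{(b_j - a_n) w q_j}{b_j(1 - a_n T)}$. For $\gamma_j$ I would let the virtual fade $h_0 \to 0$ (so $a_0 \to \infty$); substituting $\frac{p_n}{a_n} - T = -(1-p_n) T_1$ collapses $\gamma_j$ to $1 - \frac{\alpha_j b_j}{w q_j}(1-p_n) T_1$, so $\gamma_j > 0 \iff \alpha_j < U_j^\gamma$ with $U_j^\gamma = \frac{w q_j}{b_j (1-p_n) T_1}$. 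The lemma thus reduces to finding $\alpha_j \in (0,\, \min(U_j^\beta, U_j^\gamma))$ with $\sum_j \alpha_j = 1$, which is possible exactly when $\sum_j \min(U_j^\beta, U_j^\gamma) > 1$.

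The next step is to evaluate that sum, and here the index $k$ of (\ref{thm:cond1}) enters. A short computation gives $U_j^\beta < U_j^\gamma \iff T_1 < 1/b_j$; since $1/b_j$ is increasing in $j$ and $\frac{1}{b_k} \le T_1 \le \frac{1}{b_{k+1}}$, the minimum equals $U_j^\gamma$ for $j \le k$ and $U_j^\beta$ for $j > k$. I would then sum the two pieces against the closed forms $\sum_j U_j^\beta = \frac{w - a_n T}{1 - a_n T}$ and $\sum_j U_j^\gamma = \frac{T}{(1-p_n)T_1}$, using the identities $w\sum_j \frac{q_j}{b_j} = T$, $\sum_j q_j = 1$, and $\sum_{i=1}^{n-1}\frac{p_i}{a_i} = (1-p_n)T_1$. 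After clearing the common factor $(1 - a_n T)^{-1}$, the inequality $\sum_j \min(U_j^\beta, U_j^\gamma) \ge 1$ simplifies term by term to $(w-1) T_1 \ge w\sum_{j=1}^k q_j \big(T_1 - \tfrac{1}{b_j}\big)$, which is precisely Condition (\ref{thm:cond2}).

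The main obstacle is the bookkeeping in this final reduction: one must split the sum at $k$ correctly, track the sign of $1 - b_j T_1$ on each side of the threshold, and check that telescoping against the two closed forms yields exactly (\ref{thm:cond2}) rather than a weaker bound. A secondary, milder point is strict-versus-nonstrict inequality: Condition (\ref{thm:cond2}) delivers $\sum_j \min \ge 1$, so to guarantee strictly positive $\alpha_j$ summing to $1$ I would keep $a_0$ large but finite, where the exact bound $U_j^\gamma(a_0)$ lies strictly above its $a_0 \to \infty$ limit, and invoke continuity in $a_0$ to obtain the needed strict slack.
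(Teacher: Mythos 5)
Your proof is correct and follows essentially the same route as the paper's: reduce positivity of all entries to per-column upper bounds on $\alpha_j$ (with $\gamma_j>0$ handled by sending the virtual fade term $a_0\to\infty$), split $\sum_j \min(U_j^\beta,U_j^\gamma)$ at the index $k$ by comparing $T_1$ with $1/b_j$, and collapse the algebra, via the identities $w\sum_j q_j/b_j=T$ and $T-p_n/a_n=(1-p_n)T_1$, to exactly Condition (\ref{thm:cond2}) --- this is precisely the paper's construction and computation. One caution on your closing strictness patch: the claim that the finite-$a_0$ bound $U_j^\gamma(a_0)$ lies strictly above its $a_0\to\infty$ limit is false exactly where it matters, namely for $j\le k$, since $U_j^\gamma(a_0)-U_j^\gamma(\infty)$ has the sign of $1-T_1 b_j$; hence continuity in $a_0$ supplies the needed slack only when (\ref{thm:cond2}) holds with strict inequality, and the equality case remains as loose in your argument as it is in the paper's own proof.
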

\begin{proof}[Proof of Lemma \ref{existence:alphas}]
In order to make the entries of $A$ positive, it is enough if $\alpha_j,\beta_j, \gamma_j>0$. Consider positive $\alpha_j$'s, we first find the conditions on $\alpha_j$'s that makes $\gamma_j$'s positive. We can rewrite $\gamma_j$ as
\[\gamma_j=\frac{a_0\left(1-\frac{\alpha_j b_j}{w q_j}(T-\frac{p_n}{a_n})\right)+ \frac{\alpha_j b_j}{w q_j}(1-p_n)-b_j}{a_0-a_n}.\]

Since $a_0$ corresponds to a dummy fade therefore we can increase it $a_0\rightarrow \infty$ (decrease $h_0$ to zero). Thus we can make $\gamma_j$ positive if coefficient of $a_0$ is positive in the numerator. It gives the following constraint on $\alpha_j$'s:
\eq{\label{alpha:cond1} \alpha_j \le \frac{w q_j}{b_j} \frac{1}{T-\frac{p_n}{a_n}}.}

Next, we derive conditions on $\alpha_j$'s that makes $\beta_j$'s positive. We also refer to entries of $A$ as $A_{ij}$, where $i \in \{0,1,\dots ,n\}$ and $j\in\{1,2,\dots,m\}$. 
It is easy to check that under the following condition, $\beta_j$ is positive:

\eq{\label{alpha:cond2} \alpha_j \le \frac{w q_j}{b_j}\frac{b_j-a_n}{1- a_n T}.}

Thus, if $\alpha_j$ satisfies both (\ref{alpha:cond1}) and (\ref{alpha:cond2}), $\gamma_j$ and $\beta_j$ are positive. We can simplify these two conditions and write them as the following:

\[ \alpha_j \le \frac{w q_j}{b_j} \min\left\{\frac{b_j-a_n}{1-a_n T},  \frac{1}{T-\frac{p_n}{a_n}} \right\}. \]
Therefore, the lemma is proved if we can show the followings:

\eq{\label{alpha:cond} \sum_{j=1}^m \frac{w q_j}{b_j} \min\left\{\frac{b_j-a_n}{1-a_n T},  \frac{1}{T-\frac{p_n}{a_n}} \right\} \ge 1.}

It is easy to check that $\frac{b_j-a_n}{1- a_nT}\le \frac{1}{T-\frac{p_n}{a_n}}$ if and only if $b_j\le \frac{1}{T_1}$. Let $k$ be defined as the index of the smallest $b_j$, greater than $\frac{1}{T_1}$, as it is defined in Theorem \ref{thm1}. Condition (\ref{alpha:cond}) can be simplified as in Equation (\ref{lem:minsum}).

\begin{table*}
\label{table1}
\begin{align}
\frac{w q_j}{b_j} \min\left\{\frac{b_j-a_n}{1-a_n T},  \frac{1}{T-\frac{p_n}{a_n}} \right\} &=
w \sum_{j=1}^k \frac{q_j}{b_j} \frac{1}{T-\frac{p_n}{a_n}}+w \sum_{j=k+1}^m \frac{q_j}{b_j}\frac{b_j-a_n}{1-a_nT}\notag\\
&=\frac{w}{T-\frac{p_n}{a_n}} \sum_{j=1}^k \frac{q_j}{b_j} + \frac{w}{1-a_nT}\left(1-\sum_{j=1}^k q_j\right) -\frac{a_n}{1-a_nT}\left(T-w\sum_{j=1}^k\frac{q_j}{b_j}\right)\notag\\
&=\frac{1}{1-a_nT}\left(w\sum_{j=1}^k q_j\left(\frac{1}{T_1 b_j-1}\right)+w-a_nT\right)\label{lem:minsum}
\end{align}
\end{table*}

Replacing (\ref{lem:minsum}) with the left hand side of (\ref{alpha:cond}), the condition simplifies to:

\eq{\label{lem:end} (w-1)T_1 \ge w\sum_{j=1}^k q_j \left( T_1-\frac{1}{Q^*+\frac{1}{g_j^2}} \right),}

which is equivalent to Equation (\ref{thm:cond2}) and holds true from assumption of lemma. This proves the lemma.
\end{proof}

Note that from Definition of $A$ in (\ref{matrix:A}), we have:

\begin{align}
w \sum_{j=1}^m\left(\frac{q_j}{Q^*+\frac{1}{g_j^2}}\sum_{i=0}^k A_{ij}\right)=& w\sum_{j=1}^m\frac{\beta_j q_j}{b_j} + \notag\\
&w\sum_{j=1}^m\left(\frac{q_j}{b_j}\sum_{i=1}^k \frac{\alpha_j p_i b_j}{w q_j a_i}\right) \notag\\
=& w\sum_{j=1}^m\frac{\beta_j q_j}{b_j} + \sum_{j=1}^m\left(\alpha_j \sum_{i=1}^k \frac{ p_i }{ a_i}\right) \notag\\
=&w\sum_{j=1}^m\frac{\beta_j q_j}{b_j} + \notag\\
&\left(\sum_{j=1}^m\alpha_j\right) \left(\sum_{i=1}^k \frac{ p_i }{ a_i}\right) \notag\\
=&w\sum_{j=1}^m\frac{\beta_j q_j}{b_j} + \sum_{i=1}^k \frac{ p_i }{ a_i} \label{alpha:sum}\\
\ge& \sum_{i=1}^k \frac{ p_i }{ a_i}, \notag
\end{align}
 where Equation (\ref{alpha:sum}) follows from Lemma \ref{existence:alphas} and the fact that sum of $\alpha_j$'s are one.
Therefore, $A$ satisfies all the conditions of the first case in Theorem \ref{main:thm}. This proofs the theorem.
\end{proof}

\begin{proof}[Proof of Theorem \ref{thm:deg-gen}]
First, assume  condition 1 holds. So, 
\eq{\label{thm2:cond1} \displaystyle{(1-p_n) g_1^2 \ge \sum_{i=1}^{n-1} p_i h_i^2}.}

Let 
\[f(x)=\frac{1}{1-p_n}\sum_{i=1}^{n-1}\frac{p_i}{x+\frac{1}{h_i^2}}-\frac{1}{x+\frac{1}{g_1^2}}.\]

Note that:
\eq{\label{thm2:f0} f(0)=\frac{\sum_{i=1}^{n-1} p_i h_i^2}{1-p_n}-g_1^2\le 0,}
where inequality follows from the Equation (\ref{thm2:cond1}).
Taking the derivative of $f(x)$, and utilizing Jensen's inequality we can write:
\begin{align*}
f'(x)=&-\frac{1}{1-p_n}\sum_{i=1}^{n-1}\frac{p_i}{\left(x+\frac{1}{h_i^2}\right)^2}+\frac{1}{\left(x+\frac{1}{g_1^2}\right)^2}\\
\le& -\frac{1}{1-p_n}\left(\sum_{i=1}^{n-1}\frac{p_i}{x+\frac{1}{h_i^2}^2}\right)^2+\left(\frac{1}{x+\frac{1}{g_1^2}}\right)^2\\
=&\left(-\frac{1}{1-p_n}\sum_{i=1}^{n-1}\frac{p_i}{x+\frac{1}{h_i^2}}+\frac{1}{x+\frac{1}{g_1^2}}\right)\times \\
&\left(\frac{1}{1-p_n}\sum_{i=1}^{n-1}\frac{p_i}{x+\frac{1}{h_i^2}}+\frac{1}{x+\frac{1}{g_1^2}}\right)\\
=&-f(x)*V,
\end{align*}
where $V>0$ for all positive $x$. A simple argument show that this condition, $f'(x)\le -f(x) V$,  guarantee that $f(y)<0$ for all $y>x_0$ if $f(x_0)<0$. Since $f(0)<0$, $f(x)$ is also negative for all $x\ge 0$. Choosing $x=Q^*$, $f(Q^*)=T_1-\frac{1}{Q^*+\frac{1}{g_1^2}}\le 0$, and therefore:
\[T_1 \le \frac{1}{Q^*+\frac{1}{g_1^2}}.\]

Now in order to satisfy Condition \ref{thm:cond2} of Theorem \ref{thm1}, it is enough to show $(w-1)T_1\ge 0$ which is always the case.

To prove the second part of the theorem and show upper-bound holds under the condition 2, we write:

\begin{align}
T_1&=\frac{1}{1-p_n}\sum_{i=1}^{n-1}\frac{p_i}{\displaystyle Q^*+\frac{1}{h_i^2}} \notag\\
&\ge \frac{1}{Q^*+\displaystyle{\frac{1}{1-p_n}\sum_{i=1}^{n-1}\frac{p_i}{h_i^2}}} \label{thm2:a}\\
&\ge \frac{1}{Q^*+\displaystyle{\frac{1}{g_n^2}}} \label{thm2:b},
\end{align}
where (\ref{thm2:a}) and (\ref{thm2:b}) follows from Jensen's inequality and the second condition of Theorem respectively. Next, to check  the Condition \ref{thm:cond2} of Theorem \ref{thm1}, we need to show 
\eq{\label{thm2:t1}(w-1)T_1\ge w \sum_{j=1}^m q_i\left(T_1-\frac{1}{Q^*+\frac{1}{g_j^2}}\right).}

Note that the right hand side of (\ref{thm2:t1}) is equal to $w T_1-T$ and therefore (\ref{thm2:t1}), simplifies to show $(w-1)T_1\ge wT_1 -T$, or in other words, $T>T_1$. It is easy to check that $T>T_1$ since $h_n>h_j$ for all $j$'s.
\end{proof}

\begin{proof}[Proof of Theorem \ref{cap:onesided}]

Assume there exists $0<Q^*<Q$ satisfying Equation \ref{cond:Q*}. We prove  $Q^*$
satisfies condition (\ref{ach:cond}). From Jensen's inequality we can write:
\begin{align*} 
\Ex{ \log\left(\frac{ 1 + H^2 Q}{1+H^2 Q^*}\right)} &\ge \log \Ex{\frac{ 1 + H^2 Q}{1+H^2 Q^*}}\\
&=\log\left( 1+ \Ex{\frac{Q-Q^*}{Q^*+\frac{1}{H^2}}}\right)\\
&=\log\left(1+w \frac{Q-Q^*}{Q^*+\frac{1}{g^2}}\right)\\
& \ge \log\left(\frac{ 1 + g^2 Q}{1+g^2 Q^*}\right)
\end{align*}
Thus, the desired result follows from Equation (\ref{ach:wsr}) and Corollary \ref{cor:onesided}. 
\end{proof}

\begin{proof}[Proof of Theorem \ref{cap:deg-gen}]
From the fact that $h_n\ge g_m$ and using theorem assumption, it is easy to show:
\eq{\label{thmcapgen} \frac{1}{g_n^2} \ge \sum_{i=1}^n\frac{p_i}{h_i^2}=\Ex{\frac{1}{H^2}}.}
Condition (\ref{ach:cond}) follows from:

\begin{align} 
\Ex{ \log\left(\frac{ 1 + H^2 Q}{1+H^2 Q^*}\right)} &\ge \log \Ex{\frac{ 1 + H^2 Q}{1+H^2 Q^*}}\label{thm4:a}\\
&=\log\left( 1+ \Ex{\frac{Q-Q^*}{Q^*+\frac{1}{H^2}}}\right)\notag\\
&\ge \log\left(1+ \frac{Q-Q^*}{Q^*+\Ex{\frac{1}{H^2}}}\right)\label{thm4:b}\\
& \ge \log\left(\frac{ 1 + g_n^2 Q}{1+g_n^2 Q^*}\right)\label{thm4:c}\\
& \ge \Ex{\log\left(\frac{ 1 + G^2 Q}{1+G^2 Q^*}\right)},\label{thm4:d}
\end{align}
where (\ref{thm4:a}) and (\ref{thm4:b}) follows from Jensen's inequality, (\ref{thm4:c}) follows from Equation (\ref{thmcapgen}) and (\ref{thm4:d}) holds because $g_m>g_j$ for all $j$'s. Thus, the desired result follows from Equation (\ref{ach:wsr}) and Theorem \ref{thm:deg-gen}. 
\end{proof}       
\end{document}